\let\csname equation*\endcsname\relax
\let\csname endequation*\endcsname\relax
\newtheorem{propos}{Proposition}[section]
\newtheorem{cor}{Corollary}[section]
\begin{document}

\title[Traversable intra-universe wormholes and timeholes]{Traversable intra-universe wormholes and timeholes in General Relativity: two new solutions}

\author{Alexey L. Smirnov}

\address{Institute for Nuclear Research of the Russian Academy of Sciences,
60-th October Anniversary Prospect, 7a, 117312, Moscow, Russia}
\ead{smirnov@ms2.inr.ac.ru}

\begin{abstract}
Using thin shell formalism we construct two solutions of intra-universe wormholes.
The first model is a cosmological analog of the Aichelburg-Schein timehole, while another one
is an intra-universe form of the Bronnikov-Ellis solution.

\end{abstract}


\section{Introduction}
\label{sec:intro}

All traversable wormhole geometries  known so far can be roughly divided by their topological properties into two classes.
The first class describes the so-called inter-universe wormholes. A typical example is a spherically-symmetric
wormhole consisting of two asymptotically flat spacetimes connected by a common throat. 
The throat is the only causal connection between these asymptotically flat universes. 
The second class includes the so-called intra-universe wormholes. For these wormholes the throat causally connects 
two regions of the same universe. In a genuine intra-universe wormhole, these regions can be also connected 
by causal paths which are not threaded through the throat. Embedding diagrams of such wormholes have form of ``handles''. 
Of course, one can effectively turn these wormholes
into their inter-universe siblings by taking large distance between mouths thereby sweeping  
under the rug all effects related to the non-trivial topology.

According to the above classification almost all known wormhole solutions belongs to the first class, and only a small group
conform with requirements for genuine intra-universe wormholes. 
Among them, for example, models constructed in~\cite{AS,ASI,Clement}. Nevertheless, despite the scarcity of exact solutions, 
physical effects in  abstarct intra-universe wormholes were studied~\cite{FN,Thorne0,Thorne1,Thorne2}.  
In particular, as it was shown that causality violations can occur in such spacetimes. 

In the present paper  we are going to describe two more exact solutions of intra-universe wormholes in General Relativity.  
They are inspired by the model proposed in~\cite{AS}. 
This wormhole was constructed from a Reissner-Nordstr\"om~(RN) solution and a specific Majumdar-Papapetrou~(MP) 
solution by using the thin shell formalism~\cite{Israel}.
The MP~solution describes static gravitational field of two charged spheres. Electric and gravitational potentials
are constant on spheres. This crucial feature allows a continuous matching with the~RN spacetime.
In the RN~part black and white holes are causally connected. They constitute a timelike "handle" for the wormhole and necessarily 
leads to causality violations in the resulting spacetime.
Then by joining the RN region with the~MP~spacetime across its two spheres one
turns worldtubes of these spheres into thin shells. 
By construction, transitions through the resulting wormhole are one-way since one must cross horizons.  
Another distinctive feature of this spacetime is that all energy conditions can be satisfied. 
Using the terminology of~\cite{Visser}, such wormholes will be dubbed "timehole". Note in passing, that there exist 
alternative wormhole models based on the system of connected black and white holes.
For instance, the model of inter-universe wormhole proposed in~\cite{KLNS} belongs to this class.

It appears that the approach of~\cite{AS} can be used to construct at least two new solutions.
In particular, in Section~\ref{sec:th} a cosmological extension of the Aichelburg-Shein timehole is constructed.
It becomes possible since there exist cosmological analogs of~MP solutions, the so-called 
Kastor-Traschen~(KT) spacetimes~\cite{KT,Brill1}. Then a KT~solution exists which describes the outer gravitational 
field of two charged spheres comoving in an expanding universe. 
On the other hand, KT~solutions are reduced to
Reissner-Nordstr\"om-de Sitter~(RNdS) spacetimes in the case of spherical symmetry. 

This fact is of importance in Subsection~\ref{sec:junc} where we consider a junction
of the corresponding KT~spacetime with a RNdS wormhole by virtue of thin shells.
This junction gives a cosmological timehole, i.e. it is again a one-way wormhole 
where the null energy condition can be satisfied. 
Moreover, likewise the static timehole, causality violations can also occur in this model.
However at this stage, causal properties of the timehole are mostly obscured by the lack of knowledge about 
causal structure of the~KT~region.  

In Subsection~\ref{sec:kt} a qualitative study of the causal structure of the~KT~region is given. 
Since the corresponding KT~solution is axially symmetric, we try to deduce its causal structure by studying 
null geodesics propagated either along the axis of symmetry or in the middle plane between spheres.
Essentially the same idea was used in~\cite{Brill1} for the two centered~KT spacetime. 
However, the technique of~\cite{Brill1} cannot be directly applied in our case.
We extract the behaviour of geodesics from properties of a specific second order~ODE.    
In particular, it is possible to identify generators of horizon from solutions of this~ODE.
This analysis suggests that shells are enclosed by common inner and white hole horizons during the evolution.
However, cosmological horizon is splitted in the course of expansion thereby causally disconnecting shells.  
  
In Section~\ref{sec:swh} static wormhole geometry supported by a ghost scalar field are considered. 
In the case of  General Relativity minimally coupled to the ghost field, there also exists analog of MP~solutions~\cite{BFSZ}. 
On the other hand, spherically symmetric wormholes exist in the theory~\cite{Bronnikov,Ellis}.
Thus, to build the intra-universe wormhole we can follow essentially the same procedure as in Section~\ref{sec:th}.
Here however, there is an additional subtlety. The thin shell formalism requires the continuity of the ghost across shells. 
Then the wormhole can be constructed in the form of its universal cover. This in turn, allows to treat the ghost filed
as a phase of some complex scalar field.

In Section~\ref{sec:end} some concluding remarks are given.

Appendix~\ref{sec:ap0} contains supplementary results for subsection~\ref{sec:kt} from the theory of second order ODEs.  
They are required for estimation of root numbers in solutions of~\eref{uz}. 

Throughout the paper we use the signature~\mbox{$(-+++)$} and the units where~\mbox{$c=G=1$}.

\section{Cosmological intra-universe timehole}
\label{sec:th}

\subsection{Building the timehole}
\label{sec:junc}
  
As it was already mentioned, the key element of the Aichelburg-Schein timehole~\cite{AS} is an outer spacetime of two charged 
(non-intersecting) spheres $S^+_{1,2}$ of equal radii~$r_0$ which  are held in equilibrium by a balance between 
gravitational attraction and electrostatic repulsion. 
More precisely, the corresponding solution of the Einstein-Maxwell equations is the~MP~spacetime
\begin{equation}
\label{mp}
ds^2 = -V^{-2}dt^2+V^2(dx^2+dy^2+dz^2),
\end{equation}
where $x,y,z$ are cartesian coordinates and $V$ is a solution of Laplace's equation in these coordinates i.e.
\begin{equation}
\label{le}
\nabla^2 V=0
\end{equation}
with Dirichlet boundary conditions 
\begin{equation}
\label{dbc}
V|_{S^+_{1,2}}=V_0=\mbox{const}.
\end{equation}
Therefore continuous junction of spherically-symmetric solutions becomes possible along spheres~$S^+_{1,2}$.

This also suggest an obvious way to get a cosmological analog of the Aichelburg-Schein timehole.
Namely, we can try to replace the ~MP solution~\eref{mp} by the corresponding KT~spacetimes. 
The KT~metrics are solutions of the Einstein-Maxwell equations with a positive cosmological constant~\cite{KT,Brill1}. 
They describe extremely charged masses comoving in a spatially flat expanding or collapsing universe.
Respectively, the RN region of the original model should be replaced by a \mbox{Reissner-Nordstr\"om-de~Sitter}~(RNdS) one.
Hereafter in this subsection, we will label quantities in KT and RNdS regions 
by~$"+"$ and~$"-"$ respectively whenever it is necessary.

Then we are interested in the~KT~solution which gives the outer gravitational field of
two charged spheres~$S^+_{1,2}$ of equal radii~$r_0$. In cosmological coordinates its line element~is 
\begin{subequations}
\label{kt1}
\begin{eqnarray}
ds^2 &=& -\Omega_+^{-2}dt_+^2+\Omega_+^2(dx^2+dy^2+dz^2),\label{kt}\\
\Omega_+ &=& H_+t_+ + V(x,y,z),\quad H_+=\pm\sqrt{\frac{\Lambda_+}{3}},\label{omega0}
\end{eqnarray} 
\end{subequations}
where  $V$ is the solution of \eref{le}, \eref{dbc} we are going to borrow from \cite{AS}.
We will consider only~\mbox{$H_+>0$} branch~of~\eref{kt}, i.e. cosmological expansion. 

Following \cite{AS} we place centers of spheres on the z-axis at $z=\pm d$ and put
\begin{equation}
\label{dbc1}
V_0=\frac{m_+}{r_0}
\end{equation}
in the Dirichlet boundary condition~\eref{dbc}. The parameter~$m_+>0$ is related to the mass of the Kastor-Traschen region.

There are two representations of~$\Omega_+$.
On the one hand, due to the symmetry of the problem, it is convenient to use bispherical (toroidal) coordinates~\mbox{$(-\infty<\mu<\infty,\,0\leq\eta\leq\pi,\,0<\phi\leq 2\pi)$}, see~e.g.~\cite{KK}. Their relation with cartesian coordinates are as follows
\begin{subequations}
\begin{eqnarray}
\label{bsc}
x &=& c\frac{\sin\eta}{\cosh\mu-\cos\eta}\cos\phi,\\
y &=& c\frac{\sin\eta}{\cosh\mu-\cos\eta}\sin\phi,\\
z &=& c\frac{\sinh\mu}{\cosh\mu-\cos\eta},
\end{eqnarray}
\end{subequations}
where~$c$ is yet an arbitrary constant. 
In the bispherical coordinate system, spheres~$S^+_{1,2}$ are given by simple equations~\mbox{$\mu=\pm\mu_0$} and
\begin{eqnarray}
\label{rd}
r_0 &=& \frac{c}{\sinh\mu_0}\nonumber\\
d &=& c\coth\mu_0.
\end{eqnarray}
Then taking into account~\eref{dbc1} one can explicitly write solution~\eref{kt1} as follows
\begin{subequations}
\label{kt2}
\begin{eqnarray}
ds^2 = -\Omega_+^{-2}dt^2_++\Omega_+^2\frac{c^2}{(\cosh\mu-\cos\eta)^2}(d\mu^2+d\eta^2+\sin^2\eta d\phi^2),\label{mpbs}\\
\fl \qquad\Omega_+(t_+,\mu,\eta) = H_+t_++\frac{m_+}{r_0}\left[1+\sum_{n=-\infty}^{+\infty}(-1)^{n+1}\frac{\sqrt{\cosh\mu-\cos\eta}}{\sqrt{\cosh(\mu+2n\mu_0)-\cos\eta}}\right]\label{omega1}.
\end{eqnarray}
\end{subequations}
The potential~$V$ in square brackets was obtained in~\cite{AS} by the method of images and each term in the series is simply
potential of a particular image. Therefore \eref{omega0} can be also written~as
\begin{equation}
\label{omega2}
\Omega_+(t_+,{\mathbf r})=H_+t_++\frac{m_+}{r_0}\left(\sum_{n=1}^{\infty}\frac{(-1)^{n+1}}{\sinh n\mu_0|{\mathbf r} \pm c\coth n\mu_0 {\mathbf k}|}\right),
\end{equation}
where~${\mathbf k}$ is the unit vector along the $z$-axis.

Another part of the cosmological timehole will be a region of the RNdS spacetime 
with~\mbox{$m_-=q_-$}. 
Note that non spherically-symmetric KT solutions are known only in the form \eref{kt}. 
Thus, joining with the RNdS region has to be carried out in cosmological coordinates. 
In these coordinates the RNdS with  \mbox{$m_-=q_-$} is 
\begin{eqnarray}
ds^2 &=& -\Omega_-^{-2}dt_-^2+\Omega_-^2(dr_-^2+r_-^2(d\theta^2+\sin^2\theta d\phi^2)),\nonumber\\
\Omega_- &=& H_-t_- + \frac{m_-}{r_-}\label{rnds},
\end{eqnarray} 
i. e. a spherically-symmetric KT~spacetime.

Cosmological coordinates in \eref{rnds} do not cover the whole spacetime manifold.
It can be seen from the corresponding Carter-Penrose diagrams~\cite{Brill2}. 
In particular,~\Fref{fig:1} shows a complete RNdS manifold when~\mbox{$m=q$} and~\mbox{$mH<1/4$}.   
In this case, an infinite ladder of asymptotically de~Sitter regions exists.
In order to avoid further cluttering the figure we temporarily drop out subscripts~"-".  
\begin{figure}[t]
  \begin{center}
  \psfragscanon
  \psfrag{A}[c][c][0.8][0]{$A$}
  \psfrag{Ap}[c][c][0.8][0]{$A'$}
  \psfrag{App}[c][c][0.8][0]{$A''$}
  \psfrag{B}[c][c][0.8][0]{$B$}
  \psfrag{Bp}[c][c][0.8][0]{$B'$}
  \psfrag{C}[c][c][0.8][0]{$C$}
  \psfrag{Cp}[c][c][0.8][0]{$C'$}
  \psfrag{D}[c][c][0.8][0]{$D$}
  \psfrag{Dp}[c][c][0.8][0]{$D'$}
  \psfrag{E}[c][c][0.8][0]{$E$}
  \psfrag{Ep}[c][c][0.8][0]{$E'$}
  \psfrag{F}[c][c][0.8][0]{$F$}
  \psfrag{Fp}[c][c][0.8][0]{$F'$}
  \psfrag{sli}[c][c][0.7][0]{$t\to\infty$}
  \psfrag{ch}[c][c][0.7][45]{$t=0,\,r\to\infty$}
  \psfrag{ih}[c][c][0.7][-45]{$t\to-\infty,\,r=0$}
  \psfrag{bh}[c][c][0.7][45]{$t\to\infty,\,r=0$}
  \psfrag{r0}[c][c][0.8][0]{$r=r_0$}
  \psfrag{t0}[c][c][0.8][0]{$t=0$}
  \includegraphics[width=0.67\textwidth]{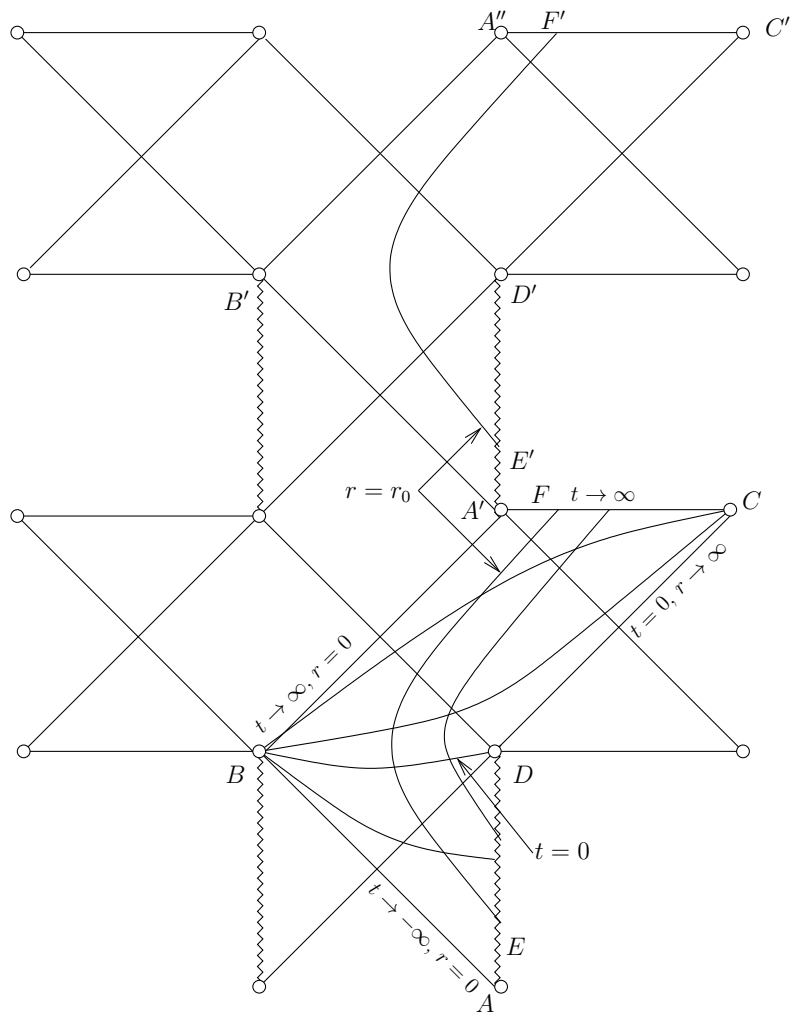}
  \caption{The Carter-Penrose diagram for a RNdS with~\mbox{$mH<1/4$}.
    Arrows point to hypersurfaces~$r=r_0$ and~\mbox{$t=0$}. Seesaw lines are 
    timelike singularities where $\Omega=0$. When matching this spacetime with the KT part to obtain the timehole, 
    regions to the right of hypersurfaces~$r=r_0$ are excised.}
  \label{fig:1}   
  \end{center}   
\end{figure}
Two successive regions covered by different patches of cosmological coordinates lie inside polygons $ABA'CD$ and $A'B'A''C'D'$.
Let us consider the region $ABA'CD$.
In \Fref{fig:1} the left-hand end of the constant~$t$ hypersurfaces is~\mbox{$r=0$}. 
For $t>0$ the right-hand end is~\mbox{$r=\infty$} - the cosmological horizon. For~\mbox{$t<0$} hypersurfaces 
ended at finite~$r$, where the singularity resides and~\mbox{$\Omega=0$}. This means that singularity in these coordinates
is a {\it sphere} at any given time~\mbox{$t<0$}. 
Hence hypersurfaces of constant~$r$ emerge from the singularity when~\mbox{$\Omega=0$}, then
cross consequently the inner horizon, the white hole horizon and the cosmological horizon
and ended at the de Sitter space-like future infinity~${\mathcal J}^+$~\mbox{$(t\to\infty)$}. 

Let us consider hypersurface~\mbox{$r_-=r_0$}. Obviously, it is the worldtube of the sphere~$S^-_1$ 
with {\it coordinate} radius~$r_0$. It bounds the region $ABA'FE$ in~\Fref{fig:1}. 
For observers sitting at sphere~$S^-_1$, their proper time~$\tau_1$ is related with time~$t_-$
in the metric~\eref{rnds} as follows
\begin{equation}
\label{pt1}
\Omega_-|_{r_-=r_0}=\left(H_-t_-+\frac{m_-}{r_0}\right)=e^{H_-\tau_1}.
\end{equation}
It is obvious that~\mbox{$\tau_1=-\infty$} at the singularity and~\mbox{$\tau_1=\infty$} at~${\mathcal J}^+$ .
Then the induced metric on the hypersurface~$r_0$ is
\begin{equation}
\label{mos}
ds^2|_{r_0}=-d\tau_1^2+e^{2H_-\tau_1}r_0(d\theta^2+\sin^2\theta\phi^2).
\end{equation}
This metric shows explicitly that $S^-_1$ is exponentially expanding in its proper time and its areal radius increases.

Further we transform the metric~\eref{kt1} to a spherical coordinate system \mbox{$(r_+,\theta,\phi)$} 
centered at~\mbox{$z=-d$}. Then spheres $S^+_1$ are spacelike sections of hypersurface~\mbox{$r_+=r_0$}. 
We will not distinguish angular coordinates of~"$\pm$" regions.
Now it is possible to continuously identify hypersurfaces~$r_0$ of~$"\pm"$ metrics only if 
\begin{equation}
\label{pt2}
\Omega_+|_{r_+=r_0}=\left(H_+t_++\frac{m_+}{r_0}\right)=e^{H_+\tau_1}
\end{equation}
and cosmological constants~\mbox{$\Lambda_+=\Lambda_-=\Lambda$}. 
Coordinates~$t_+$, $t_-$ differ by a constant shift unless~\mbox{$m_+=m_-$}, as it also follows from~\eref{pt1}, \eref{pt2}.
As result, we obtained thin charged shell $\Sigma_1$ with metric~\eref{mos} on it. 
Similarly,~$\Sigma_2$ is obtained if we identify the woldtube of~$S^-_2$ ($E'F'$ curve) with the woldtube of~$S^+_2$
by the above procedure. 

Applying the Israel equations \cite{Israel,Kuchar}, energy densities~$\sigma_{1,2}$ and stresses~$p_{1,2}$ of 
shells can be expressed by jumps in extrinsic curvatures~$[K_{ab}]$ across shells.
Then taking into account~\eref{mos}, one obtains
\begin{eqnarray}
\label{ieq}
\sigma_i &=& -\frac{1}{4\pi}[K_{\theta}^{\theta}]_{\Sigma_i}\nonumber\\ 
&=&-\frac{e^{-2H\tau_i}}{4\pi} \left(\frac{m_+}{r_0}\frac{\partial V}{\partial r_+}\Big |_{S^+_i}+ \frac{m_-}{r_0^2}\right),\qquad i=1,2,\label{ed}\\
p_i &=& \frac{1}{8\pi}\left([K_0^0]_{\Sigma_i}+ [K_{\theta}^{\theta}]_{\Sigma_i}\right) = 0\label{st},
\end{eqnarray}
where 
\begin{equation}
\label{der}
\frac{\partial V}{\partial r_+}\Big |_{S^+_i}= \mp\frac{\cosh\mu_0-\cos\eta}{4\pi c}\frac{\partial V}{\partial \mu}\Big |_{\mu=\pm \mu_0}.
\end{equation}
are derivatives of $V$ in spherical coordinates centered at~$z=\pm d$.  
Vanishing stresses in~\eref{st} show that shells consist of charged dust. 
Further, according to the~Hopf~lemma for harmonic functions, 
the directional derivative of $V$ along the outward pointing normal to the boundary must be positive, thus
\begin{equation}
\frac{\partial V}{\partial r_+}\Big |_{S^+_i}=-\frac{\partial V}{\partial n}\Big |_{S^+_i}<0.
\end{equation}
Therefore $\sigma_{1,2}$ can be made strictly positive by adjusting $m_+$ and $m_-$, similarly to the static case.

So far our discussion has not touched upon the causal structure of the KT region.
First of all, let us find its behaviour at infinities.
Introduce a spherical coordinate system~\mbox{$(r,\theta,\phi)$} centered at \mbox{$x=y=z=0$}
and consider asymptotic of~\eref{omega2} when~\mbox{$r\to\infty$}
\begin{equation}
\label{omega1as}
\Omega_+(t_+,r,\theta)=Ht_+ +\frac{M_+}{r}
+\frac{Q}{r^3}P_2(\cos\theta) + {\cal O}\left(\frac{1}{r^5}\right).
\end{equation}
Here 
\begin{equation}
\label{Mp}
M_+=2m_+\sum_{n=1}^{\infty}(-1)^{n+1}\frac{\sinh \mu_0}{\sinh n \mu_0}
\end{equation}
is the mass of the KT region.
Alternating series in~\eref{Mp} are absolutely convergent and therefore convergent.
Then similarly to the spherically-symmetric case, 
the region has spacelike infinity~${\cal J}^+$ in the future and is bounded by a cosmological horizon 
in the past (similar to $DC$ line in~\Fref{fig:1}). 
However, if we account for the quadrupole term in~\eref{omega1as},
analytic continuation of the metric across this cosmological horizon fails. 
Some derivatives of the Riemann tensor blow up at the horizon. This is a common feature for all
non spherically-symmetric KT~spacetimes, see~\mbox{\cite[Sec.~IV.B]{Brill1}} for details.

Let us summarise the results. Likewise the model of~\cite{AS}, the obtained timehole is traversable in the sense that
the RNdS region contains causally connected black and white holes as it is seen from~\Fref{fig:1}. 
However, the cosmological timehole is {\it dynamical} since the shells are expanding 
in contrast with the static solution of~\cite{AS}.
Another difference from the static model is the constraint on the type of matter composing the shells. This result is 
in accordance with conclusions obtained in~\cite{Gurses} within alternative approach to multi-shell systems in KT spacetimes.
It was shown in~\cite{Gurses} that two KT spacetimes can be only matched  across dust shells.

Thus the results of this section gives the following still incomplete picture of the cosmological timehole.
Initially only the~RNdS~spacetime exist. Then at time~$t^0_-$ which is defined 
by equation~\mbox{$\Omega_-(t^0_-,r_0)=0$}, shell~$\Sigma_1$ emerges from singularity $AD$ in~\Fref{fig:1}. 
After some time, another shell $\Sigma_2$ emerges from singularity $A'D'$. 
In the~KT~region shells emerge from singularities simultaneously at time~$t^0_+$ 
which is defined from~\mbox{$\Omega_+(t^0_+,r_0)=0$}. 
Such a behaviour suggest that causality violations may  also arise in the model at hand.
Likewise the static timehole, there exist timelike curves from the arbitrary future of~$\Sigma_1$ to
the arbitrary past of~$\Sigma_2$ in the RNdS region. 
Causality violation appears if we are able to causally close these curves through the KT region.
At this stage, it is not clear whether it is possible. 
It also expected that, if exists, the resulting time machine cannot be eternal since the cosmological 
expansions will eventually turn it off by causally disconnecting shells.
These ambiguities require more deeper understanding of the causal structure of the spacetime~\eref{kt1}.

\subsection{Null geodesics in the KT region}
\label{sec:kt}
Information about causal structure of the KT region can be extracted from the behaviour of null geodesics.
Again, we choose coordinates where spheres are centered at~\mbox{$z=\pm d$} and~\mbox{$-z_2<-z_1<0<z_1<z_2$} are respectively crossing points with shells~$\Sigma_1$,~$\Sigma_2$.
Symmetries of the solution suggest that there exist straight light rays propagated either along $z$-axis or in radial directions 
on the middle plane between shells. 

First of all we consider axial null geodesics. 
Then, indeed, by putting \mbox{$x(\lambda)\equiv 0$}, \mbox{$y(\lambda)\equiv 0$} and taking into account that 
\mbox{$\partial_x \Omega(t,0,0,z)=0$} and  \mbox{$\partial_y \Omega(t,0,0,z)=0$} one turns the geodesic equation into a system 
of two ODEs for two functions \mbox{$t(\lambda)$}, \mbox{$z(\lambda)$} of affine parameter~$\lambda$.
On the other hand, line element \eref{kt} can be considered now as the equation for the function~$t(z)$. Namely,
\begin{equation}
\label{ric}
\frac{dt}{dz}=\pm[Ht+V(0,0,z)]^2,
\end{equation}
where $"\pm"$ designates ether outgoing "+" or ingoing "-" rays. Since \eref{ric} is  a Riccati equation it can be 
reduced to a second order linear differential equation. 
One can write 
\begin{equation}
\label{tz}
t=\frac{1}{H}\left(-V\mp \frac{1}{H}\frac{u'}{u}\right),
\end{equation}
where "-" now corresponds to outgoing and "+" to ingoing rays. 
Then \eref{ric} and~\eref{tz} give
\begin{equation}
\label{tzp}
t'=\pm \frac{1}{H^2}\left(\frac{u'}{u}\right)^2.
\end{equation}
The function~$u(z)$ is a particular solution
of the following equation
\begin{equation}
\label{uz}
u''\pm H\frac{\partial V}{\partial z}u =0.
\end{equation}
Here "+" is for outgoing and "-" for ingoing rays.
All signs swap their meanings to opposite when~$z\to-z$. 

Now, using~\eref{tz},~\eref{tzp}, the $t$-component of the geodesic equation becomes identically zero, 
while integration of the $z$-component gives
\begin{equation}
\label{ap}
\lambda =c_1+c_2\int \frac{dz}{u^2(z)}.
\end{equation} 
where $c_1$, $c_2$ are integration constants. 
Thus the function~$u(z)$ defines a map of the affine parameter line into the $z$-axis. 

Further, we can calculate the expansion of the axial null congruence. By using~\eref{tz} and~\eref{ap}, tangential vector field
can be written as
\[
k^{\mu}=\{\pm (\Omega u)^2,0,0,u^2\}.
\]   
A transversal vector field is defined by the condition $k_{\mu}N^{\mu}=-1$ and has the form
\[
N^{\mu}=\frac{1}{2}\{\pm u^{-2},0,0,-(\Omega u)^{-2}\}.
\]
Again "+" is for outgoing and "-" for ingoing geodesics.
Then the expansion is calculated as follows~\cite{Poisson}
\begin{eqnarray}
\label{exp}
\theta &=&\nabla_{\mu}k^{\nu}+\nabla_{\mu\nu}N^{\mu}k^{\nu}+\nabla_{\nu\mu}k^{\nu}N^{\mu}=\frac{\dot Y}{Y},\nonumber\\
     Y &=& \left(\frac{u'}{u}\right)^2=\frac{\dot u^2}{u^6},
\end{eqnarray}
where the overdot is the derivative with respect to $\lambda$ and integration constants are absorbed into $u$. 
The expansion is infinite and~\mbox{$Y(\lambda)=0$} at conjugated points~\cite{HE}. 
However, generators of horizons are special in this respect since they have no conjugate points 
and therefore $\theta(\lambda)$ must be bounded. 

Now, since~\eref{tz} is defined through~\eref{uz}, it is possible to deduce the causal structure directly from
the behaviour of  $u(z)$  by virtue of the following heuristic rules:
\begin{enumerate}[label=(\roman*)]
\item The condition $\Omega=0$ defines singular boundaries for every spacelike hypersurface~\mbox{$t<0$} as
equipotential surfaces of $V$. Then~\mbox{$t=-V(0,0,z)/H$} defines moment of time when the point $z$
emerges from singularity. Evidently, segments of~$u(z)$ will have physical meaning only when 
the corresponding~\mbox{$t(z)\geq -V(0,0,z)/H$} in~\eref{tz}. 
This implies the first rule: {\it $u(z)$, $u'(z)$ must have different signs 
for outgoing and the same signs for ingoing light rays.} 

\item {\it Let $u(z_0)=0$, $u'(z_0)\neq 0$ then intervals $z<z_0$ and $z>z_0$ are causally disconnected by cosmological horizon.}
Indeed, in this case there are two branches of $t(z)$ in the vicinity of $z_0$.
By rule~(i) the physical branch is defined by $t(z_0)\to\infty$.
At the same time,~\eref{ap} and~\eref{exp} suggest that $\lambda\to\infty$ when $z\to z_0$ and 
$\theta(\lambda)$ is bounded for all $\lambda$ and therefore  
such a geodesic is a generator of cosmological (particle) horizon. 
 
\item
{\it A light ray meet a singularity at $z_*$ if $u'(z_*)=0$ and $u(z_*)\neq 0$.}

If $u(z)$ has an extremum at $z_*$ then from~\eref{tz}, we have~\mbox{$Ht_*=Ht(z_*)=-V(0,0,z_*)$} i.e. $\Omega_+(t_*,0,0,z_*)=0$. 
Also, according to~\eref{ap} the affine parameter has a finite value at $z_*$. Thus indeed, the null geodesical incompleteness 
takes place there.
While by construction,~\eref{tz} is regular across $z_*$, rule~(i) defines the corresponding segments of~$u(z)$
unambiguously. 
Let us note however, that if~\mbox{$u''(z_*)=0$} then signs of $u(z)$, $u'(z)$ may remain the same across~$z_*$.  
In this special case, both segments of $t(z)$ may have physical meaning.

\item Let us consider degenerate case $u'(z_*)=0$, $u(z_*)= 0$. By virtue of the L'hopital's rule along with~\eref{uz} one
obtains
\begin{equation}
\label{droot}
\left(\lim_{z\to z_*} \frac{u'}{u}\right)^2=\mp H\frac{\partial V}{\partial z}
\end{equation}
i.e. limit exists if  r.h.s. is non-negative. If r.h.s is zero then there is a singularity at~$(t_*,z_*)$. 
However,~\eref{ap} and~\eref{exp} shows that $\lambda\to\pm\infty$ at~$z_*$
and $\theta$ is bounded for all~$\lambda$. {\it Thus if~$z_*$ is finite, the corresponding geodesic 
is either a generator of past Cauchy horizon
($\lambda\to\infty$) or future Cauchy horizon ($\lambda\to-\infty$) with an end point at~$(t_*,z_*)$.
The hypersurface~$t_*$ has an edge according to~\mbox{\cite[Proposition 6.5.3]{HE}}. 
If~$z_*$~is infinite then the behaviour of generators in the vicinity of point~$D$ at~\Fref{fig:1} is reproduced.}

\end{enumerate}
Note that the potential $V$ in question is regular in its domain. Thus solutions $u(z)$ 
are also regular. For $V$ with poles the above rules should be amended. 
 
At this stage, it is instructive to make a digression and test the rules first for a RNdS spacetime 
in cosmological coordinates. Then
\begin{equation}
\label{rndspot}
\frac{\partial V}{\partial z} = -\frac{m}{z^2}
\end{equation}
and \eref{uz} is the Euler's equation.
Its general solutions are divided into three classes
\begin{equation}
\label{ssktr}
u(z)=
\begin{cases}
|\bar z|^{\frac{1}{2}}(C_1|\bar z|^{\alpha}+C_2|\bar z|^{-\alpha}) & \text{if } \,\,\beta<1 \\
|\bar z|^{\frac{1}{2}}(C_1+C_2\log |\bar z|) & \text{if } |\beta|=1\\
|\bar z|^{\frac{1}{2}}(C_1\sin(\alpha\log |\bar z|)+C_2\cos(\alpha\log |\bar z|))  & \text{if } \,\,\beta>1.
\end{cases}
\end{equation}
Here $\bar z=Hz$, $\alpha=\sqrt{|1-\beta|}/2$ and $\beta=\pm 4mH$ with "-" is for outgoing and "+" for ingoing rays.
It is enough for our purposes to consider only spacetimes with $mH<1/4$ and $mH>1/4$. In the following we will call
such spacetimes subcritical and supercritical respectively.  

In these spacetimes let us choose some hypersurface of constant non-zero radius~$r_b$. 
The hypersurface crosses the $z$-axis at $\pm z_b$ where~\mbox{$r_b=|z_b|$}. 
In order to avoid the pole~\mbox{$r=0$}, let us cut off the spacetime region $r<r_b$. 
We obtained a spherically-symmetric analog of  
the~\mbox{Kastor-Traschen} region of the timehole. We are interested in outgoing and ingoing rays with respect to~$z_b$.
Note, the region $r\geq r_b$ first come into existence when~\mbox{$t_0=-m/Hr_b$}. 
Then rule~(i) implies that all physical solutions~\eref{tz} must have $t(z_b)>t_0$ and
$u(z_b)$, $u'(z_b)$ have different signs for outgoing and the same signs for ingoing rays.  

Outgoing rays are described by the first line of \eref{ssktr} with $1/2<\alpha<1$ in either spacetime. 
One has three allowed shapes of $u(z)$ up to overall sign. They are presented at \Fref{fig:2}.
\begin{figure}[H]
  \begin{center}
  \psfragscanon
  \psfrag{zp}[c][c][0.8][0]{$z_b$}
  \psfrag{zs}[c][c][0.8][0]{$z_*$}
  \psfrag{z0}[c][c][0.8][0]{$z_0$}
  \psfrag{z}[c][c][0.8][0]{$z$}
  \psfrag{u}[c][c][0.8][0]{$u$}
  \psfrag{a}[c][c][0.8][0]{$(a)$}
  \psfrag{b}[c][c][0.8][0]{$(b)$}
  \psfrag{c}[c][c][0.8][0]{$(c)$}
  \includegraphics[width=0.8\textwidth]{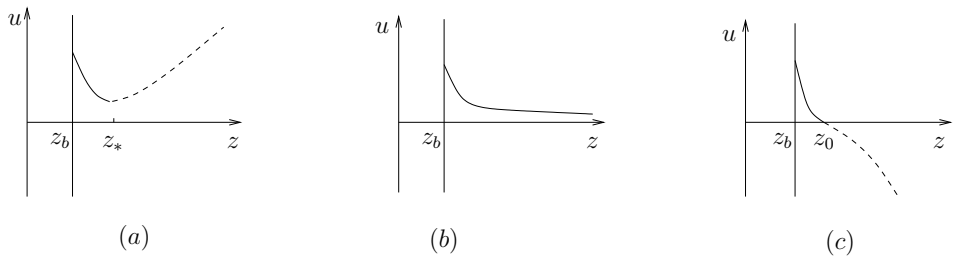}
  \caption{A schematic graphs of $u(z)$ for outgoing rays. Unphysical parts of solutions are depicted by dashed lines.}
  \label{fig:2}   
  \end{center}   
\end{figure}
According to rule~(ii) solutions at \Fref{fig:2}($a$) describes rays emitted from $z_b$ and 
absorbed by singularity. 

Solution at \Fref{fig:2}($b$) has $C_1=0$ and describes a generator of the inner horizon according to rule~(iv). 
Indeed, as one can see from \Fref{fig:1}, generators of the inner horizon have $t(z)\to 0$ when $z\to\infty$. 
Similarly, at \Fref{fig:2}($b$) one has $u(z)\to 0$ and $u'(z)\to 0$ when~$z\to\infty$.  
Therefore as it follows from \eref{droot}, the corresponding~$t(z)\to0$ in this limit. 
This conclusion is confirmed by direct calculation of the radius in static coordinates~\mbox{$R=Hrt+m$} 
by using~\eref{tz} along with~\eref{ssktr}.

At last, by rule~(ii), \Fref{fig:2}(c) describes ray reaching $r=|z_0|=r_0$ in infinite time. 
Hence the hypersurface~\mbox{$r=r_b$} will be eventually separated from $r_0$ by its particle horizon.

For ingoing rays in subcritical case solutions $u(z)$ is still defined by the first line of~\eref{ssktr} with~$\alpha<1/2$.
Possible shapes of $u(z)$ up to overall sign are shown at \Fref{fig:3}
\begin{figure}[H]
  \begin{center}
  \psfragscanon
  \psfrag{zp}[c][c][0.8][0]{$z_b$}
  \psfrag{zs}[c][c][0.8][0]{$z_*$}
  \psfrag{z0}[c][c][0.8][0]{$z_0$}
  \psfrag{z}[c][c][0.8][0]{$z$}
  \psfrag{u}[c][c][0.8][0]{$u$}
  \psfrag{d}[c][c][0.8][0]{$(d)$}
  \psfrag{e}[c][c][0.8][0]{$(e)$}
  \psfrag{f}[c][c][0.8][0]{$(f)$}
  \includegraphics[width=0.8\textwidth]{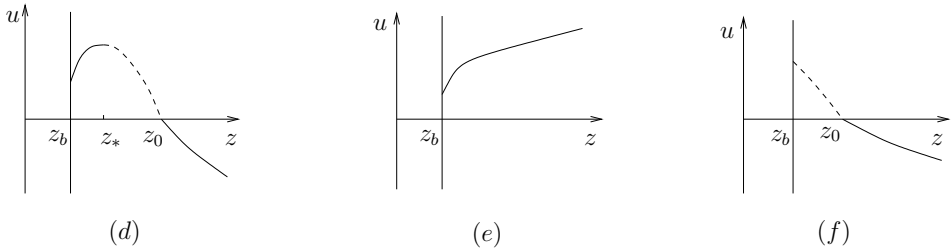}
  \caption{A schematic graphs of $u(z)$ for ingoing rays.}
  \label{fig:3}   
  \end{center}   
\end{figure}
Solutions of type~($d$) contain two physically permitted segments. The positive segment describes rays emerged from singularity. 
The negative segment is in fact equivalent to the type~($f$) and describes generator of a particle horizons.
\Fref{fig:3}($e$) defines rays arriving from the past cosmological horizon when~\mbox{$C_1>0$},~\mbox{$C_2>0$}.
In this case $\lambda$ lies in a finite range and $\theta$ blows up.  
The intermediate case~$C_1=0$ describes a generator of the white hole horizon. According to rule~(iii) 
in this case $\lambda$ increases from~$-\infty$.

For the supercritical case, ingoing rays are defined by the last line of \eref{ssktr}. Solutions are oscillatory.
This is to be expected since the spacetime now contains naked singularity and any ingoing ray must emerge from it.
Then by rule~(iii), infinite number of extrema provides such possibility.  

In order to further clarify the above picture, all these solutions can be mapped into Carter-Penrose diagrams 
as it is shown in \Fref{fig:4}.
\begin{figure}[t]
  \begin{center}
  \psfragscanon
  \psfrag{z0}[c][c][0.8][0]{$r=r_0$}
  \psfrag{zp}[c][c][0.8][0]{$r=r_b$}
  \psfrag{a}[c][c][0.8][0]{$a$}
  \psfrag{b}[c][c][0.8][0]{$b$}
  \psfrag{c}[c][c][0.8][0]{$c$}
  \psfrag{d}[c][c][0.8][0]{$d$}
  \psfrag{e}[c][c][0.8][0]{$e$}
  \psfrag{f}[c][c][0.8][0]{$f$}
  \includegraphics[width=0.8\textwidth]{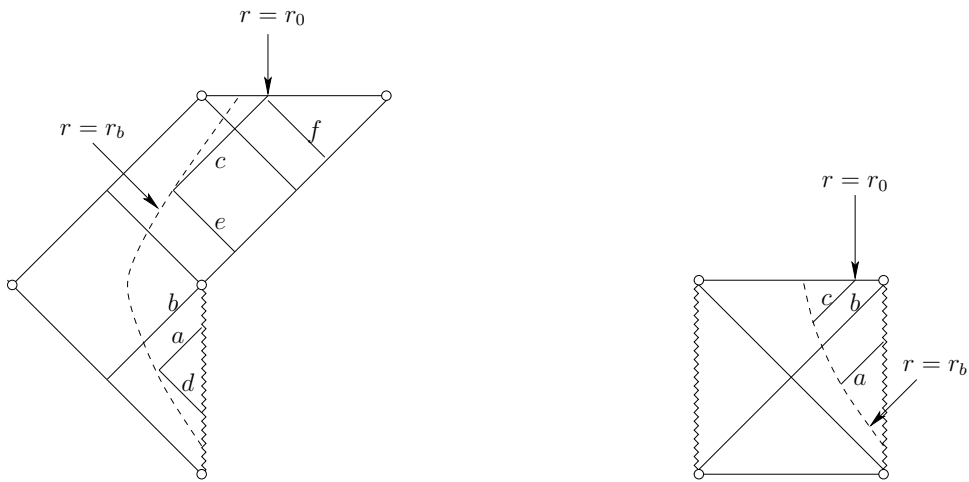}
  \caption{Various types of light rays in Carter-Penrose diagrams.}
  \label{fig:4}   
  \end{center}   
\end{figure}

Thus from the point of view of \eref{uz}, subcritical and supercritical 
cases differ by a number of roots for $u(z)$. In the subcritical case there is at most one root for {\it every} solution $u(z)$.
Hereafter, following \cite{Hartman}, we will say that equation~\eref{uz}
is {\it disconjugate} if every nontrivial solution has at most one zero in its domain. 

It is natural to suppose that the difference by roots numbers is preserved for general KT spacetimes.
This is indeed the case for the metric~\eref{omega2}.
However, in order to not interrupt the flow of the article, technical details were gathered in Appendix.   

Now we proceed with the KT region of the timehole.
Since the Aichelburg-Schein potential $V$ is a harmonic function, 
we know its behaviour along coordinate axes from the maximum principle. 
A schematic shape of $V(0,0,z)$ is presented in \Fref{fig:5}.
\begin{figure}[H]
  \begin{center}
  \psfragscanon
  \psfrag{zm1}[c][c][0.8][0]{$-z_2$}
  \psfrag{zm2}[c][c][0.8][0]{$-z_1$}
  \psfrag{zp1}[c][c][0.8][0]{$z_1$}
  \psfrag{zp2}[c][c][0.8][0]{$z_2$}
  \psfrag{z}[c][c][0.8][0]{$z$}
  \psfrag{V}[c][c][0.8][0]{$V$}
  \psfrag{Vs}[c][c][0.8][0]{$V_s$}
  \psfrag{0}[c][c][0.8][0]{$0$}   
  \includegraphics[width=0.7\textwidth]{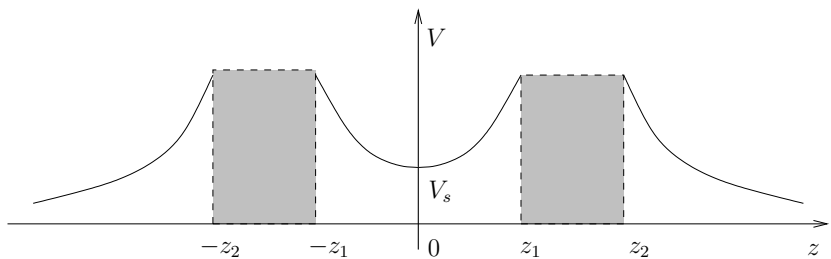}
  \caption{A schematic plot of $V(0,0,z)$. Shaded regions lie inside the spheres.}
  \label{fig:5}   
  \end{center}   
\end{figure}
Also note that the KT region emerges from singularities at~\mbox{$t_0=-m_+/Hr_0$}.
This implies that all physical solutions~\eref{tz} must have $t(z_b)>t_0$.
Therefore on shells,~$u(z)$ and $u'(z)$ must have different signs for outgoing and the same signs for ingoing rays.

In the case $|z|\geq z_2$, it is enough to consider light rays from/to $z_2$ (shell~$\Sigma_2$). 
For outgoing rays, Proposition \ref{prop:2} can be applied. 
Then the general solutions of \eref{uz}  can be written as
\[
u(z)=C_1 u_1(z)+ C_2 u_0(z).
\] 
Moreover, asymptotic for $u_1(z)$ and $u_0(z)$ near infinity according to~\eref{omega1as} are defined 
by corresponding~\eref{ssktr}. 
Using properties of $u_0(z)$ and $u_1(z)$, it is easy to see that allowed shapes of $u(z)$ are still as in \Fref{fig:2}
and have the same meaning.

For ingoing rays, \eref{uz} is disconjugate if~\mbox{$M_+H<1/4$} and $M_+$ is defined by~\eref{Mp}.
The proof is given in Appendix \ref{sec:ap1}. 
Then by using solely~\eref{uz} and the condition~\mbox{$\partial_z V<0$} for $z\geq z_2$, we can reconstruct 
all possible shapes of solutions for~\eref{uz}. They are still described by~\Fref{fig:3} and corresponding
asymptotics~\eref{ssktr}. Therefore, their interpretations remains the same as in the spherically-symmetric case.   

Let us consider the interval \mbox{$|z|\leq z_1$}.
Due to the symmetry $z\to -z$ it is enough to study outgoing and ingoing rays with reference to $-z_1$ (shell $\Sigma_1$)
and $z_1$ (shell $\Sigma_2$) respectively. 
Again, as it is shown in Appendix \ref{sec:ap1}, equation \eref{uz} is disconjugate when ~\mbox{$M_+H<1/4$}.

At this stage, a note about singularities are in order. Singularities evolve  in a general KT spacetime.
We know that they are defined by equipotential surfaces of $V$ at each moment of time. 
For the spherically-symmetric case these are simply spheres of increasing radii when~\mbox{$t\to 0_-$}. 
For the Aichelburg-Schein potential let us draw the shape of $V(x,0,0)$ in addition to~\Fref{fig:5}. It is shown at the next figure
\begin{figure}[H]
  \begin{center}
    \psfragscanon
    \psfrag{x}[c][c][0.8][0]{$x$}
    \psfrag{V}[c][c][0.8][0]{$V$}
    \psfrag{Vs}[c][c][0.8][0]{$V_s$}
    \psfrag{0}[c][c][0.8][0]{$0$}   
    \includegraphics[width=0.7\textwidth]{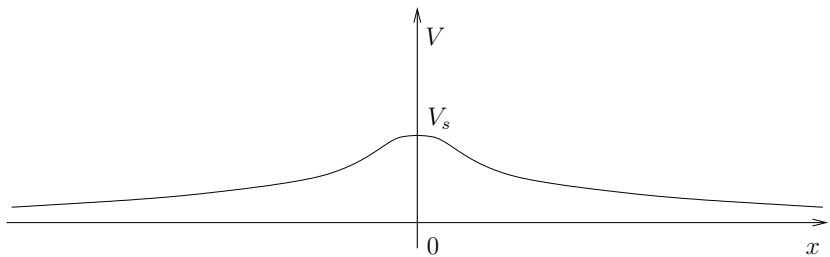}
    \caption{A schematic plot of $V(x,0,0)$.}
    \label{fig:6}   
  \end{center}    
\end{figure}
Then by combining Figures~\ref{fig:5},~\ref{fig:6}, one can see that $V$ is a saddle as it is expected from 
the maximum principle. Dissecting $V$ by equipotential surfaces we arrive to the following picture
\begin{figure}[H]
  \begin{center}
    \psfragscanon
    \psfrag{t1}[c][c][0.8][0]{$Ht<-V_s$}
    \psfrag{t2}[c][c][0.8][0]{$Ht=-V_s$}
    \psfrag{t3}[c][c][0.8][0]{$Ht>-V_s$}
     \psfrag{z}[c][c][0.8][0]{$z$}
    \includegraphics[width=0.8\textwidth]{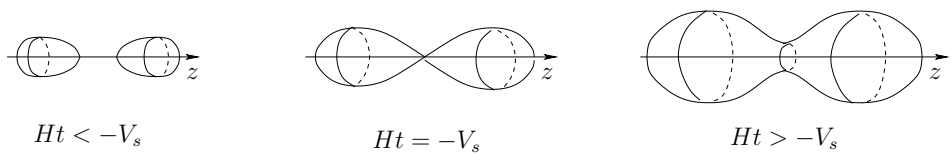}
    \caption{Singular boundaries of the KT region at various moments of time.}
    \label{fig:7}   
  \end{center}    
\end{figure}
Thus we see that initially separated singularities coalesce into a single one when~\mbox{$t=-V_s/H$}.
Moreover, it means the middle plane between shells emerges from the singularity at this moment.

Taking this picture into account we can make a few preliminary observations about possible shapes of $u(z)$.
First of all, singularities are absent on the interval~\mbox{$|z|\leq z_1$} when~\mbox{$t>-V_s/H$}. 
Hence there may exist solutions of \eref{uz} without extrema.
On the other hand, solutions with two extrema are forbidden. Indeed, in this case, one would have
\mbox{$t(-z_*)=t(z_*)=-V(0,0,z_*)/H$} by symmetry, that contradicts \eref{tzp} unless~\mbox{$z_*=0$}.
Also note, that since $z_1$ is the parameter of the model, disconjugate solutions of~\eref{uz} may not acquire   
additional roots when varying $z_1$. 

Having these constraints in mind, we can plot possible disconjugate solutions (up to overall sign)
by using only~\eref{uz}. For outgoing rays from $-z_1$ they are presented at~\Fref{fig:8}.
\begin{figure}[t]
  \begin{center}
    \psfragscanon
    \psfrag{u}[c][c][0.8][0]{$u$}
    \psfrag{zm}[c][c][0.8][0]{$-z_1$}
    \psfrag{zp}[c][c][0.8][0]{$z_1$}
    \psfrag{z0}[c][c][0.8][0]{$z_0$}
    \psfrag{zs}[c][c][0.8][0]{$z_*$}
    \psfrag{a}[c][c][0.8][0]{$(a)$}
    \psfrag{b}[c][c][0.8][0]{$(b)$}
    \psfrag{bp}[c][c][0.8][0]{$(b')$}
    \psfrag{c}[c][c][0.8][0]{$(c)$}
    \psfrag{d}[c][c][0.8][0]{$(d)$}
    \psfrag{e}[c][c][0.8][0]{$(e)$}
    \includegraphics[width=0.8\textwidth]{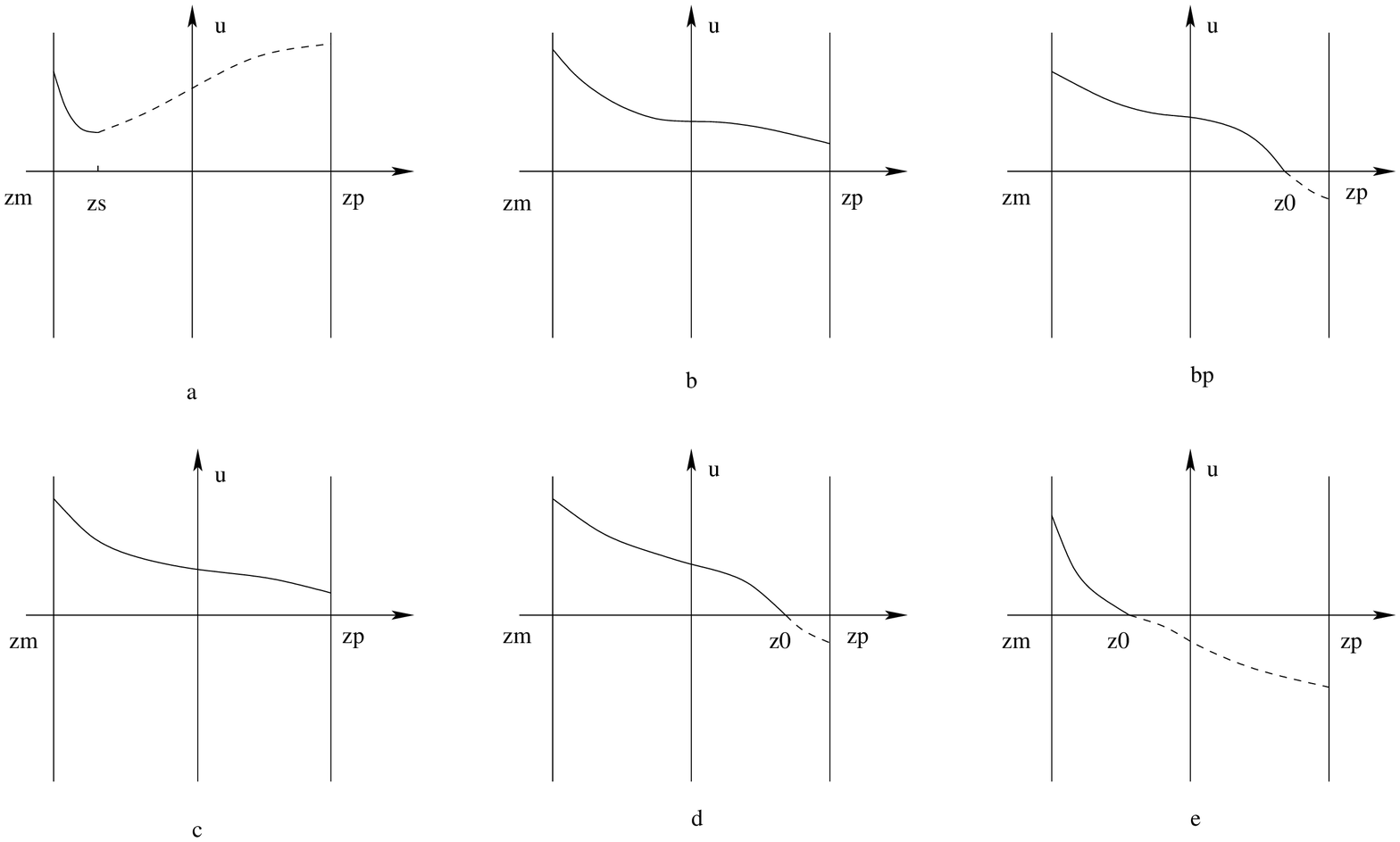}
    \caption{Solutions $u(z)$ on $[-z_1,z_1]$ for light rays emitted from~$z_1$.
      Types $b$ and $b'$ have~$\bar u'(0)=0$.}
    \label{fig:8}   
  \end{center}    
\end{figure}
On the other hand, particular solutions of~\eref{uz} on~$[-z_1,z_1]$, can be written in a number of ways. 
For example, one can consider
\begin{equation}
\label{gsz1z1}
\bar u(z)=
\begin{cases}
 \bar C_1u(z) + \bar C_2u(z)\int_z^0\frac{d\bar z}{u^2(\bar z)}  & \text{if } \,\,z\in [-z_1,0], \\
 \bar C_1u(z) - \bar C_2u(z)\int_0^z\frac{d\bar z}{u^2(\bar z)} & \text{if } \,\,z\in [0,z_1]
\end{cases}
\end{equation} 
where $u(z)>0$ is a solution on $[-z_1,z_1]$ which guaranteed to exist by Proposition~\ref{prop:0}.
Then by imposing constraints on the coefficients, one can obtain solutions depicted in~\Fref{fig:8}.

For instance, solutions~($c$) and~($b$)  requires $\bar u'(0)\leq 0$ and $\bar u(z_1)>0$
if~$\bar C_1>0$,~$\bar C_2>0$. Let us put $\bar C_1=1$ and $\bar C=\bar C_2$ for convenience, then
\begin{equation}
\label{ccc}
u(0)u'(0)\leq\bar C<\frac{1}{\int_0^{z_1}\frac{d\bar z}{u^2(\bar z)}}.
\end{equation} 
Of course, the approach at hand has ambiguities, since behaviour of~$u(z)$ is not known.
If~$u'(0)\leq 0$ then this inequality is automatically satisfied. Otherwise, $z_1$ must be small.
Hence, as expected, the causal picture depends on the distance between shells.    
The left inequality in~\eref{ccc} is strict for the type~($c$), which describes
light rays emitted from shell $\Sigma_1$ and received on $\Sigma_2$. 
If the left inequality is saturated then~($b$) describes special situation when~$\bar u'(0)=0$ and~$\bar u''(0)=0$ 
mentioned in the discussion of rule~(iii) above. 
In this case light ray emitted from ~$\Sigma_1$ absorbed by singularity at~\mbox{$t=-V_s/H$}
and at the same time ray is emitted from the singularity toward~$\Sigma_2$.

A particle horizon is developed between shells when 
inequalities $\bar u'(0)\leq 0$, $\bar u(z_1)<0$ are satisfied. They lead in turn to constraints
\begin{equation}
\label{cdc1}
\bar C \geq  u(0)u'(0)
\end{equation}
and 
\begin{equation}
\label{cdc2}
\bar C > \frac{1}{\int_0^{z_1}\frac{d\bar z}{u^2(\bar z)}}.
\end{equation}
Strict inequalities give types~($d$) and~($e$), while solutions~($b'$) must saturate~\eref{cdc1}.
The resulting equality is compatible with~\eref{cdc2} only if $u'(0)>0$.

Eventually, functions~\eref{tz} generated from solutions of~\Fref{fig:8} can be divided into two sets as it is shown 
at~\Fref{fig:9}. By construction, different $t(z)$'s cannot have intersection points. Therefore 
their positions in the plots are defined unambiguously.  
\begin{figure}[H]
  \begin{center}
    \psfragscanon
    \psfrag{t}[c][c][0.8][0]{$t$}
    \psfrag{z}[c][c][0.8][0]{$z$}
    \psfrag{t0}[c][c][0.8][0]{$t_0$}
    \psfrag{ts}[c][c][0.8][0]{$t_s$}
    \psfrag{a}[c][c][0.8][0]{$a$}
    \psfrag{b}[c][c][0.8][0]{$b$}
    \psfrag{bp}[c][c][0.8][0]{$b'$}
    \psfrag{c}[c][c][0.8][0]{$c$}
    \psfrag{d}[c][c][0.8][0]{$d$}
    \psfrag{e}[c][c][0.8][0]{$e$}
    \includegraphics[width=0.8\textwidth]{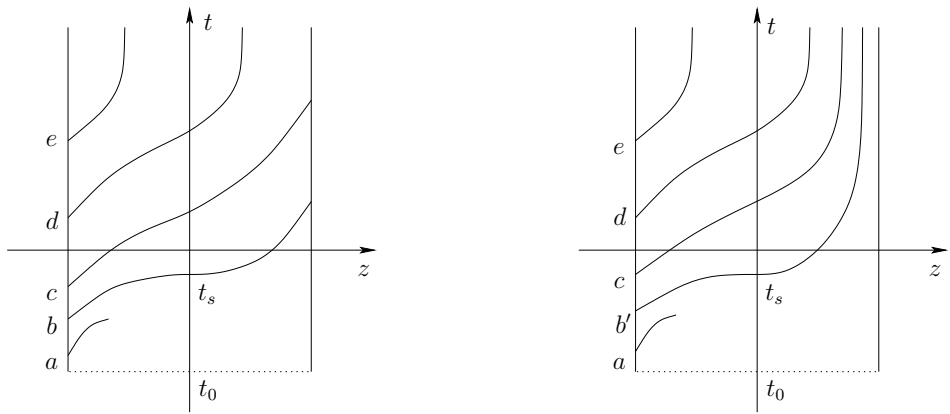}
    \caption{Schematic pictures of outgoing light rays on $[-z_1,z_1]$.}
    \label{fig:9}   
  \end{center}    
\end{figure}
The existence of curves of types ($d$) and ($e$) clearly shows that shells eventually becomes 
causally disconnected along $z$-axis.

Of course,~\Fref{fig:8} and~\Fref{fig:9} represent also ingoing light rays towards~$z_1$.
However, these are not all possibilities. One should also take into account 
light rays which are emitted from the right singularity on $[0,z_1]$  when~\mbox{$t<-V_s/H$}. 
Corresponding~$u(z)$ are depicted at~\Fref{fig:10}
\begin{figure}[H]
  \begin{center}
    \psfragscanon
    \psfrag{u}[c][c][0.8][0]{$u$}
    \psfrag{zm}[c][c][0.8][0]{$-z_1$}
    \psfrag{zp}[c][c][0.8][0]{$z_1$}
    \psfrag{z0}[c][c][0.8][0]{$z_0$}
    \psfrag{zs}[c][c][0.8][0]{$z_*$}
    \psfrag{a}[c][c][0.8][0]{$(a')$}
    \psfrag{b}[c][c][0.8][0]{$(a'')$}
    \includegraphics[width=0.8\textwidth]{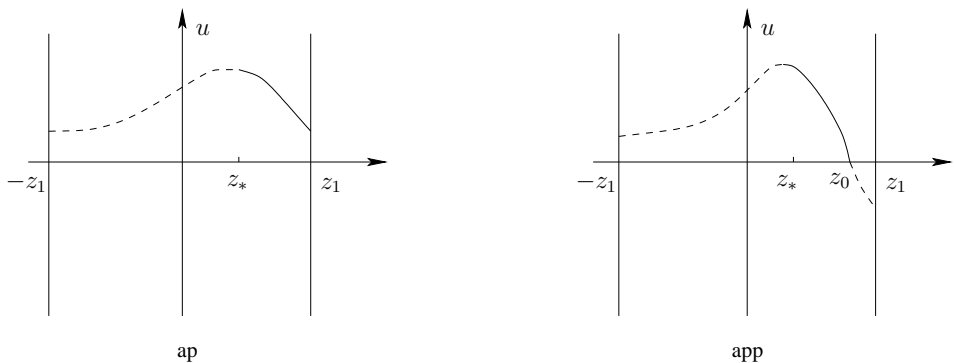}
    \caption{Solutions $u(z)$ for light rays emitted from the right singularity towards~$z_1$.}
    \label{fig:10}   
  \end{center}    
\end{figure} 
In accordance with the above discussion ($a'$) is possible when the distance between shells is small. Otherwise,
the case~($a''$) is realised. 

We note that the disconjugacy forbids solutions with $u(z_*)=0$, $u'(z_*)=0$ which would obey rule~(iv). 
Indeed, by small variations of integration constants $u(z)$ could turn into solution having two roots.

At last, let us briefly discuss null geodesics in the middle plane between shells.
According to~\Fref{fig:7}, the plane emerge from the singularity when~\mbox{$t=-V_s/H$}. 
Further it is convenient to use cylindrical coordinates~\mbox{($z,\rho,\phi)$}. 
Then we are interested in geodesics propagated along $\rho$-directions. 
These geodesics do exist since~\mbox{$\partial_z\Omega(t,\rho,0)=0$} and one can put~\mbox{$z(\lambda)\equiv 0$},~\mbox{$\phi(\lambda)\equiv 0$} in the geodesic equation. Therefore equations~\eref{tz} to~\eref{exp} can be deployed for function~$u(\rho)$,
as well as the rules~\mbox{(i)-(iv)}. Disconjugacy of the corresponding~\eref{uz} is achieved 
for~\mbox{$0\leq\rho<\infty$} when~\mbox{$M_+H<1/4$} and null geodesics are defined similarly to the case~\mbox{$|z|>z_2$} above.
Hence the past Cauchy and the white hole horizons manifest themselves by the presence of the corresponding generators. 

Closing the section, let us summarize our findings.  Behaviour of geodesics along the $z$-axis shows 
no signs of the white hole horizons as well as the past Cauchy horizons between shells. Being combined with 
the results for $\rho$-directions this suggests that there only exist horizons {\it common} for both shells.    

Finally, the above analysis supports assumptions given at the end of the previous section.
Namely, geodesics of type ($c$) in~\Fref{fig:9} will provide causality violation in the complete timehole spacetime provided
the distance between spheres is small enough. However later on, a particle horizon will be developed which will break causal loops.  

\section{Static intra-universe wormhole supported by ghost scalar field.}
\label{sec:swh}
The Aichelburg-Schein potential $V$ described in the previous section can be used to construct intra-universe wormholes
supported by a ghost scalar field. It is possible since there exists solutions somewhat similar to the MP ones 
for the ghost field minimally coupled to gravity~\cite{BFSZ}.
Namely, consider Einstein-ghost theory with the action 
\begin{equation}
\label{gact}
S=\int d^4x \sqrt{-g} \left[\frac{1}{8\pi}R+\nabla_{\mu}\phi\nabla^{\mu}\phi\right],
\end{equation}
According to~\cite{BFSZ}, one has the following static solutions of the Einstein-ghost equations,
\begin{subequations}
\label{bfsz}
\begin{eqnarray}
ds^2 &=& -e^{2V}dt^2+e^{-2V}(dx^2+dy^2+dz^2),\label{m_bfsz}\\
\phi(x,y,z) &=& \frac{V(x,y,z)}{\sqrt{4\pi}},\label{f_bfsz}
\end{eqnarray}
\end{subequations}
where $V$ is the solution of the Dirichlet problem~\eref{le} and~\eref{dbc} which now vanishes at infinity. 
Thus~\eref{bfsz} describes an outer spacetime of two non-intersecting spheres with scalar charges which are held in equilibrium 
by a balance between gravitational attraction and repulsion by the ghost. 

On the other hand, spherically-symmetric static wormhole solutions exist in the theory~\eref{gact}. 
These are Bronnikov-Ellis~\cite{Bronnikov,Ellis} wormhole. 
The technical similarity with the previous section is now evident. 
We can try to construct intra-universe wormhole by matching a Bronnikov-Ellis solution
with~\eref{bfsz} along thin spherical shells.
Let us perform this surgery explicitly.

The Bronnikov-Ellis wormhole is the following spherically-symmetric static solution of theory ~\eref{gact}
\begin{subequations}
\label{be}
\begin{eqnarray}
ds^2 &=& -e^{2v}dt^2+e^{-2v}[dr^2+(r^2+a^2)d\Omega^2],\label{m_be}\\
\phi(r) &=& \frac{v(r)}{\sqrt{4\pi A^2}},\label{f_be}
\end{eqnarray}
\end{subequations} 
Here the radial coordinate $r\in(-\infty,\infty)$ and 
\begin{eqnarray}
v(r) &=& \frac{m}{a}\arctan{\frac{r}{a}},\label{v}\\
A &=& \frac{m^2}{m^2+a^2}\label{a}.
\end{eqnarray} 
The parameter $m$ defines where the wormhole throat resides, 
namely at $r=m$ the minimum of the areal radius~\mbox{$R(r)=e^{-v}\sqrt{r^2+a^2}$} is attained.
Observe also, that opposite sides of the wormhole have nonequivalent characteristics, provided~\mbox{$m\neq 0$}. In particular,
clocks tick at different rate at any~\mbox{$r_2<0<r_1$} since values of the redshift function~\mbox{$e^{v(r_2)}<e^{v(r_1)}$}.
On the other hand, when~$m=0$ we have~$v(r)=0$.
   
Now consider how to join metrics. In the following, solutions~\eref{bfsz} and~\eref{be} along with
related quantities will be labelled respectively by~$"+"$ and~$"-"$. 
In particular, the spacetime~\eref{m_bfsz} is denoted by~$W_+$ and spheres~$S_1^+$,~$S_2^+$ have coordinate radius~$r_0$.
The spheres~$S_1^-$ and~$S_2^-$ are placed respectively at~$r_1>0$ and~$r_2<0$ in~\eref{m_be}, and
the spacetime region~\mbox{$r_2\leq r\leq r_1$} is a 'handle'~$W_-$. 
A straightforward junction of~$W_{\pm}$ similar to the described 
in~\cite{AS} and~Section~\ref{sec:junc} would lead to a continuous metric across static shells~$\Sigma_1$ and~$\Sigma_2$~if
\begin{equation}
\label{rc}
e^{-2V_0}r^2_0 = e^{-2v(r_i)}(r_i^2+a^2),\qquad i=1,2.
\end{equation}
while equating the timelike parts of~$"\pm"$ line elements at these radii.
In the following, in order to proof the concept, we intentionally limit ourselves to the case~$m=0$ and then 
\begin{equation}
\label{r12}
r_1=|r_2|=\sqrt{e^{-2V_0}r^2_0-a^2}.
\end{equation}
We also fix~$V_0$ by~\mbox{$\phi_+|_{\Sigma_2}=\phi_-|_{\Sigma_2}$} and can solve~\eref{r12} for~$r_1$,~$r_2$. 
However, discontinuity in the ghost field at~$\Sigma_1$ is inevitable
\begin{equation}
\label{disc}
D=\phi_-(r_1)-\phi_-(r_2)=2\phi_-(r_1). 
\end{equation}
It leads to failure of the thin shell formalism since the energy-momentum tensor of~$\Sigma_1$
now contains square of the~$\delta$-function.
Therefore to avoid this obstruction, a modification of the model is required. 
It becomes possible if we appeal to the language of covering spaces~\cite{Basener, Geroch}.  

The formal definition of the covering space is the following. Let~$X$ be a topological space and let~$P:\widetilde X\to X$
be an onto map. Then~$P$ is a covering map and~$\widetilde X$ is a covering space for~$X$ if for every~$x\in X$ there is 
a neighbourhood~$U$ of~$x$ such that~$P^{-1}(U)$ is the disjoint union of open sets each of which is mapped homeomorphically 
onto~$U$~by~$P$.

The largest of covering spaces is the universal covering manifold.
Let us choose some fixed point~$x_0\in X$ and let~$C$ be the set
\[
C\equiv\{x,\gamma|x\in X, \gamma\,\,\text{is a curve from}\,\,x\,\,\text{to}\,\,x_0\}.
\]
If~$(x,\gamma)$ and~$(x',\gamma')$ are elements of~$C$ one can write~\mbox{$(x,\gamma)\sim (x',\gamma')$} 
when~$x=x'$ and~$\gamma$ is homotopic to~$\gamma'$. Then~$\sim$ is an equivalence relation in~$C$ and 
the equivalence classes define the points of the universal covering manifold of~$X$.

Returning to our construction, we can generate another Bronnikov-Ellis particular solution by adding an arbitrary 
constant to~\eref{f_be} while leaving the metric unchanged. We choose
\begin{equation}
\label{phipm}
\phi_-'(r') = \phi_-(r) - D.
\end{equation} 
Similarly instead of~\eref{f_bfsz}, a new $"+"$ solution
\begin{equation}
\label{phipp}
\phi_+'(x',y',z')=\phi_+(x,y,z) + D.
\end{equation}
Here primes over coordinates are just trivial relabelling.

Then we continuously join~$W_-$,~$W'_+$ along shell~$\Sigma'_1$ 
and $W_{\pm}$ along~$\Sigma_2$. The junction condition~\eref{rc} on~$\Sigma'_1$ 
equates areal radii of $S^1_-$ and $S^{1'}_+$. 
By construction,~\mbox{$\phi_-|_{\Sigma'_1}=\phi'_+|_{\Sigma'_1}$}, i.e.~$\phi$ is continuous across~$\Sigma'_1$. 
Similarly, the junction of~$W_+$ and~$W'_-$ regions along~$\Sigma_1$ gives~\mbox{$\phi_+|_{\Sigma_1}=\phi'_-|_{\Sigma_1}$}. 
Then we treat the  resulting spacetime as two consecutive sheets~($W'_-\cup W_+$~and~$W_-\cup W'_+$) of a covering spacetime.
Stacking new sheets indefinitely by the above procedure results in a complete covering spacetime~$\widetilde W$ with
everywhere continuous~$\phi$. Since~$\widetilde W$ is simply connected by construction, it is the universal cover
for the intra-universe wormhole.
Geometrically, to obtain the non-simply connected wormhole~$W$ from~$\widetilde W$, we identify all regions~$W_-$,~$W'_-$, etc. 
as well as regions~$W_+$,~$W'_+$,~etc. 
More formally,~$W$ can be written as the factor space~$\widetilde W/\mathbb{Z}$.

Since the corresponding shells are also identified into two mouths~$\Sigma_1$ and~$\Sigma_2$,
their energy densities and stresses must not depend on particular sheet.
The Israel equations~(cf.~\eref{ieq}) give rise to the same result for any sheet
\begin{eqnarray}
4\pi\,\sigma_i &=& e^{V_0}\left(\frac{\partial V}{\partial r_+}\Big|_{r_+=r_0}-\frac{1}{r_0}\right)+\frac{r_i}{r_i^2+a^2},\\
8\pi\,p_i &=& \frac{e^{V_0}}{r_0}-\frac{r_i}{r_i^2+a^2},\quad i=1,2.
\end{eqnarray}
Here, similarly to~\eref{ieq}, derivatives of $V$ are calculated in spherical coordinates centered at one of~$z_+=\pm d$.

Further we can foliate~$W$ by hypersurfaces of simultaneity which 
are homeomorphic to~\mbox{$S^2\times S^1$} minus a point. Indeed, identifying the sheets of~$\widetilde W$ we trivially equate
the respective time coordinates~$t_{\pm}$ and obtain Cauchy surfaces homeomorphic to~\mbox{$S^2\times S^1-\text{pt}$}.

For any pair of sheets in~$\widetilde W$, the difference between field values at the same point is given by some number $nD$, 
where $n\in\mathbb{Z}$. It can be treated as topological invariants of the field~$\phi$. 
Informally, these numbers define how many times field lines wind through the wormhole. 
Since the field is static, we have analog (modulo differentiability) of the winding number for the gradient of~$\phi$. 
More precisely, there exists conserved topological tensor current in any copy of~$W_{\pm}$
\begin{equation}
\label{tcur}
J^{\mu\nu\lambda} = \varepsilon^{\mu\nu\lambda\rho}\partial_{\rho}\phi.
\end{equation}
The corresponding topological charge is a line integral of the Hodge dual of~$J$ i.e.
\begin{equation}
\label{tch}
Q=\int \star J.
\end{equation}
Likewise the winding number it can be used to detect the homotopy class of a path.

Let us consider, for instance, a continuous curve started at the spatial infinity~$i_0$ of~$W$ wound once through 
the throat and ran back to~$i_0$. This curve can be treated as an infinite loop with the base point at~$i_0$. 
The lift of this curve in~$\widetilde W$ can be decomposed into three segments. 
One segment starts at~$\Sigma_2$ then goes through the throat to the next sheet and ends at~$\Sigma'_1$ 
while other two connects~$\Sigma'_1$ and~$\Sigma_2$ with~$i'_0$ and~$i_0$ respectively. 
Then obviously~\eref{tch} gives~\mbox{$Q=-D$} and the loop cannot be shrunk to a point in~$W$. 
The corresponding loop with~$Q=D$ has reverse orientation.
On the other hand, for a shrinkable spacelike loop in the~$"+"$ region with the base point at~$i_0$ we have~$Q=0$. 

These reasonings show that the ghost behaves similar to a phase of a complex scalar field.
In fact, instead of~$\phi$ in~\eref{gact} we can consider a complex scalar ghost with the Lagrangian  
\begin{equation}
\label{csf}
L=-\frac{1}{4\pi^2}\left[\nabla_{\mu}\Phi^*\nabla^{\mu}\Phi +\lambda(\Phi^*\Phi-v^2)\right],
\end{equation}
where~$\lambda$ is a lagrange multiplier.\footnote{Theory~\eref{csf} gives a simple example of automorphic field theory in sense~of~\cite{Dowker}.}
The equations of motions for~$\Phi$
\begin{equation}
\label{emF}
\nabla_{\mu}\nabla^{\mu}\Phi =-\frac{\nabla_{\mu}\Phi^*\nabla^{\mu}\Phi}{v^2}\Phi,\quad\Phi = ve^{i\frac{2\pi}{v}\phi},
\end{equation} 
where~$\phi$ is the original ghost field. Surely~\eref{emF} is trivially reduced to the equation for~$\phi$.  
However Cauchy surfaces of~$W$ is~\mbox{$S^2\times S^1-\text{pt}$}, and~$\Phi$ gives a map from the~$S^1$ factor of this
surface to a circle of radius~$v=D$ in the field complex plane, i.e. it defines the elements of the fundamental group of~$W$. 

At the end of the section let us comment apparent difference in construction of the presented wormhole and the timehole. 
While the wormhole~$W$ is different from its universal cover~$\widetilde W$, in the case of the timehole these spacetimes
coincide. This, in turn, implies that the fundamental group of the timehole manifold must be trivial.

\section{Comments and outlook}
\label{sec:end}
Closing the paper some comments are in order. 

Preliminary analysis shows that obtained spacetimes are unstable. The cosmological timehole inherits 
the Cauchy horizon instability of the~RNdS~region~\cite{BNS}. 
Thus the timehole is traversable modulo this instability. 
The wormhole supported by the ghost seems to inherit instability from its spherically-symmetric parts~\cite{GGS1,GGS2}.
Ideally however, we must consider all spacetimes~$W_{\pm}$ and shells as a coupled system with "periodic" boundary 
conditions~\eref{phipp} imposed on the ghost field. 
There is a dim hope that such a system would be stable in some regime. We leave this question 
to the future work. Still, we consider obtained solutions as useful preliminary step towards constructing more advanced models. 

The fact that the cosmological timehole contains a white hole singularity is expected from the chronology protection conjecture.
Indeed, the absence of singularities would mean that a time machine would develop from regular initial conditions.
On the other hand, for the case $\Lambda=0$, there are theorems~\cite{Tipler} which explicitly forbid such a possibility
when energy conditions are not violated. The cosmological timehole solution suggest that there may exist extensions of 
these theorem to the case of positive cosmological constant.   

Qualitative treatment of the structure of the~KT region given in Subsection~\ref{sec:kt} is still fairly incomplete. 
We were not able to trace generators of horizons in arbitrary directions. 
However the obtained results could be incorporated quite naturally into more general framework
if we use the results of~\cite{Casey}. It was shown in this paper that one can study null geodesics in axially symmetric KT 
spacetimes, projecting them on a plane containing axis of symmetry. The approach of~\cite{Casey} requires to study
a system of third order ODEs for spatial coordinates as functions of the euclidean arc-length parameter~$l$. 
Then the time function~$t(l)$ could be defined through the equation similar to~\eref{uz} but now with respect to~$l$.
It is important that the coefficient before~$u(l)$ in~\eref{uz} is now directional 
derivative~$\dot{\mathbf r}\partial_{{\mathbf r}}V$ and it is always less than steepest descent value~$\partial_zV$. 
Then the difference in root numbers for subcritical and supercritical cases is preserved.
While construction of the equations for apparent horizon is quite nontrivial task the outlined approach could serve as 
its feasible alternative for study of the causal structure of~KT spacetimes.

Results of Section~\ref{sec:swh} can be considered as a preliminary stage towards more advanced set of models.
In particular the obvious next step is to consider wormhole geometries~\eref{be} with~$m\neq 0$.
In this case one could expect causality violations, since the red-shift factor is different for two mouths.
Moreover, we could replace the boundary conditions~\eref{dbc} with more general ones. Namely,   
if solution for the Dirichlet problem with non-equal constant potentials on spheres~$S^+_{1,2}$ could be found, 
it would give us another class of wormholes. It is expected that these solutions would also allow causality violations. 

\section{Acknowledgements}
This work was partially supported by the Russian Foundation for Basic Researches~(RFBR), research project 15-02-05038.
  
\appendix
\section*{Appendix A}
\setcounter{section}{1}
\renewcommand{\thesection}{\Alph{section}}
\subsection{Some results on the theory of disconjugate ODEs.}
\label{sec:ap0}
Let us consider the equation
\begin{equation}
\label{guz}
u''+q(z)u=0
\end{equation}
which is defined on an interval~$I$.

First necessary and sufficient condition for \eref{guz} to be disconjugate is the following
\begin{propos}[Corollary 6.1 \cite{Hartman}]
\label{prop:0}
Let $q(z)$ be continuous on $I$. If $I$ is open or is closed and bounded, then \eref{guz} is disconjugate on $I$
iff \eref{guz} has a solution satisfying $u(z)>0$ on $I$. If $I$ is half-closed interval or a closed half-line, then \eref{guz}
is disconjugate on $I$ iff there exists a solution $u(z)>0$ on the interior of $I$.   
\end{propos}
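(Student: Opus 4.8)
The plan is to establish the two implications separately and to reduce the hard direction to the compact case. For the \emph{if} direction I would invoke the Sturm separation theorem: since \eref{guz} carries no first-order term, the Wronskian of any two solutions is a nonzero constant, so the zeros of two linearly independent solutions strictly interlace. Hence if some solution $u_0>0$ has no zero on $I$, a solution linearly independent of $u_0$ cannot have two zeros in $I$, for a zero of $u_0$ would have to lie strictly between them; and a solution proportional to $u_0$ has no zero at all. Therefore every nontrivial solution has at most one zero, i.e.\ \eref{guz} is disconjugate.

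For the converse on a \emph{closed bounded} interval $I=[a,b]$ I would construct the positive solution by hand. Let $\phi$ be the solution with $\phi(a)=0$, $\phi'(a)=1$, and $\psi$ the solution with $\psi(b)=0$, $\psi'(b)=-1$. Disconjugacy forces $\phi>0$ on $(a,b]$ and $\psi>0$ on $[a,b)$, since neither may return to a second zero. Then $u=\phi+\psi$ is again a solution of \eref{guz}, it is strictly positive in the interior, and at each endpoint one summand is strictly positive while the other vanishes. Hence $u>0$ on all of $[a,b]$, as required.

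For an \emph{open} interval, a \emph{half-closed} interval, or a \emph{closed half-line}, I would exhaust $I$ from the inside by an increasing sequence of compact subintervals $I_n=[a_n,b_n]$ with $\bigcup_n I_n=\mathrm{int}\,I$, fix an interior reference point $t_0$, and apply the compact case on each $I_n$ to obtain solutions $u_n>0$ on $I_n$ normalized by $u_n(t_0)=1$. Passing to the limit yields a nonnegative solution $u_\infty$ on $\mathrm{int}\,I$ with $u_\infty(t_0)=1$. Because a nontrivial solution of \eref{guz} cannot have a double zero (value and derivative both vanishing would force it to be identically zero), any interior zero of $u_\infty$ would be a minimum and hence a double zero, a contradiction; thus $u_\infty>0$ throughout the interior. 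When $I$ is open this already gives positivity on all of $I$, whereas when $I$ has a closed endpoint the principal solution may genuinely vanish there, as $\phi$ does at $a$ above, which is precisely why the statement only asserts positivity on the interior in the half-closed and half-line cases.

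The main obstacle is the limiting step that produces $u_\infty$: one must bound the normalized slopes $u_n'(t_0)$ so that, by continuous dependence on initial data (or Arzel\`a--Ascoli applied to $\{u_n\}$ together with $\{u_n'\}$), a convergent subsequence exists. I would control these slopes through the classical monotonicity underlying the principal-solution construction, equivalently by exhibiting a solution of the Riccati equation $r'+r^2+q(z)=0$ on all of $I$ and setting $u=\exp\!\big(\int r\big)$. Proving that this Riccati solution does not blow up in finite range, which is exactly where disconjugacy enters essentially, is the technical heart of the argument; the endpoint and half-line bookkeeping is then routine.
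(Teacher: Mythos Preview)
The paper does not prove this proposition; it is quoted verbatim as Corollary~6.1 of Hartman's \emph{Ordinary Differential Equations} and used as a black box in Appendix~A.2. So there is no in-paper argument to compare against, and the relevant question is whether your sketch reproduces Hartman's proof.

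It does, in outline. Your \emph{if} direction via Sturm separation is exactly the standard one-line argument. Your \emph{only if} on a compact interval, forming $u=\phi+\psi$ from the two boundary-vanishing solutions, is the classical construction. For the non-compact cases Hartman proceeds precisely through an exhaustion by compact subintervals, normalization at an interior point, and passage to a limit; the principal-solution monotonicity you invoke (equivalently, the global existence of a solution of the associated Riccati equation $r'+r^2+q=0$ under disconjugacy) is indeed what controls the slopes and makes the limit exist. Your closing remark that the limit solution may legitimately vanish at a closed endpoint, explaining the interior-only positivity in the half-closed and half-line clauses, is also the correct reading.

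The only place your write-up is thin is the compactness step itself: you announce that disconjugacy prevents Riccati blow-up but do not carry it out. That is the substantive content of the lemma, and in a full proof you would need to show explicitly that $r_n(t_0)=u_n'(t_0)/u_n(t_0)$ is monotone in $n$ and bounded (this is where the principal/nonprincipal dichotomy enters). As a proof \emph{proposal} it is fine, but be aware that this is the step a reader would ask you to expand.
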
 
Also useful criterion for \eref{uz} to be disconjugate is the so-called "variational principle". A function $\eta(z)$ on subinterval $[a,b]\in I$ belongs to class
$A_1(a,b)$ (or $A_2(a,b)$) if~$\eta(a)=\eta(b)=0$ and $\eta(z)$ is absolutely continuous with $\eta'(z)$ is of class $L^2$ (or~$\eta(z)$ and $\eta'(z)$ are continuously differentiable on~$[a,b]$). Let us define
\begin{equation}
\label{jfun1}
J(\eta;a,b)=\int_a^b(\eta'^2-q\eta^2)dz \qquad \eta\in A_1(a,b).
\end{equation}
If $\eta$ is also in $A_2(a,b)$ then integration by parts gives
\begin{equation}
\label{jfun2}
J(\eta;a,b)=-\int_a^b\eta (\eta''+q\eta)dz \qquad \eta\in A_2(a,b).
\end{equation}
Then the following result holds
\begin{propos}[Theorem 6.2, Exercise 6.3 \cite{Hartman}]
\label{prop:1}
Let $q(z)$ be continuous function on~$I$. Then \eref{guz} is disconjugate on $I$ iff for every closed bounded subinterval 
$[a,b]$ of~$I$, the functional  $J(\eta;a,b)\geq 0$
on class $A_1(a,b)$ (or $A_2(a,b)$) with $J(\eta;a,b)=0$ iff $\eta\equiv 0$. 

If $I$ is not a closed bounded interval then \eref{guz} is disconjugate on it if \eref{jfun2} is satisfied 
for all $[a,b]\in I$ and all $\eta\in A_2(a,b)$.
\end{propos}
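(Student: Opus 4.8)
The plan is to prove the equivalence by closing a chain of implications through the two admissible test-function classes, with the substitution $\eta = uw$ — where $u>0$ is the solution furnished by Proposition~\ref{prop:0} — carrying the essential content.

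First I would dispose of the easy direction: positivity of $J$ forces disconjugacy. Suppose \eref{guz} were not disconjugate; then some nontrivial solution possesses two zeros, and passing to a pair of consecutive zeros $a<b$ one obtains a solution $u$ lying in $A_2(a,b)$ with $u(a)=u(b)=0$ and $u\not\equiv 0$. Substituting $\eta=u$ into \eref{jfun2} and invoking $u''+qu=0$ yields $J(u;a,b)=0$, contradicting the stipulation that $J$ vanishes only for $\eta\equiv 0$. The very same argument establishes the second assertion of the Proposition: if \eref{jfun2} is nonnegative on every $[a,b]\subset I$, no solution can acquire a second zero, so \eref{guz} is disconjugate.

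For the substantive direction, assume \eref{guz} is disconjugate on $I$ and fix a closed bounded subinterval $[a,b]\subset I$. Disconjugacy is inherited by $[a,b]$, so by the closed-bounded case of Proposition~\ref{prop:0} there is a solution $u>0$ on all of $[a,b]$, endpoints included. Given any $\eta\in A_1(a,b)$ I would set $w=\eta/u$; since $u$ is bounded away from zero on the compact interval, $w$ is absolutely continuous with $w'\in L^2$ and $w(a)=w(b)=0$. A direct rearrangement using $u''=-qu$ gives the pointwise identity
\[
\eta'^2-q\eta^2 \;=\; u^2 w'^2 + \frac{d}{dz}\!\left(u u' w^2\right),
\]
valid almost everywhere. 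Integrating over $[a,b]$, the total-derivative term contributes the boundary value $[uu'w^2]_a^b$, which vanishes because $w(a)=w(b)=0$, leaving
\[
J(\eta;a,b)\;=\;\int_a^b u^2 w'^2\,dz\;\ge\;0,
\]
with equality precisely when $w'\equiv 0$, hence $w\equiv 0$ (as $w(a)=0$), hence $\eta\equiv 0$.

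Because this establishes strict positivity already on the larger class $A_1(a,b)$, it holds a fortiori on $A_2(a,b)\subset A_1(a,b)$, and the chain \emph{disconjugate} $\Rightarrow$ $J>0$ on $A_1$ $\Rightarrow$ $J>0$ on $A_2$ $\Rightarrow$ \emph{disconjugate} closes, proving the equivalence for either test class. The main obstacle is not conceptual but the measure-theoretic bookkeeping in this last direction: I must confirm that for a merely absolutely continuous $\eta$ the quotient $w=\eta/u$ and the product $uu'w^2$ are themselves absolutely continuous, so that the displayed identity may be integrated by the fundamental theorem of calculus and the boundary term extracted legitimately. Once these regularity points are checked — they follow from $u\in C^2$, $u>0$, and $\eta\in A_1$ — the remaining computation is routine.
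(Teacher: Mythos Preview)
The paper does not prove this proposition at all; it is quoted verbatim from Hartman, and the only trace of a proof in the paper is the remark ``cf.\ also proof of Theorem 6.2 in \cite{Hartman}'' accompanying the identity $J^+(\eta;a,b)=\int_a^b u^2\zeta'^2\,dz$ in the proof of Corollary~A.1. That identity is exactly your substitution $\eta=uw$, so your approach coincides with the one the paper cites, and the main argument is correct.

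One small gap: for the second clause you write that ``the very same argument'' shows mere nonnegativity of $J$ on $A_2$ forces disconjugacy. It does not: if a nontrivial solution $u$ vanished at consecutive points $a<b$, you would obtain $J(u;a,b)=0$, which is consistent with $J\ge 0$. The extra ingredient is that when $I$ is not closed and bounded one may enlarge the interval slightly past $b$; the presence of the conjugate point $b$ strictly inside $[a,b']$ then forces the second variation to be indefinite, producing a smooth $\eta\in A_2(a,b')$ with $J(\eta;a,b')<0$ (equivalently, smooth the corner of the piecewise function $u\cdot\chi_{[a,b]}$ and note the correction is $O(\epsilon)$ while the gain from extending is of fixed negative sign). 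With that remark added, your proof is complete.
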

By the above proposition, the condition $q(z)\leq 0$ on $I$ is sufficient for \eref{guz} to be disconjugate.
Then the following result is valid 
\begin{propos}[Corollary 6.4 \cite{Hartman}]
\label{prop:2}
Let $q(z)\leq 0$ be continuous on $I:a\leq z<b$. Then \eref{guz} has a solution satisfying 
\[
u_0(z)>0,\qquad u_0'(z)\leq 0 \qquad \text{for} \qquad a\leq z<b 
\]
and a solution $u_1(z)$ such that 
\[
u_1(z)>0,\qquad u_1'(z)\geq 0 \qquad \mbox{for} \qquad a\leq z<b. 
\]
\end{propos}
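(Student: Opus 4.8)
The plan is to treat the two solutions separately, since the increasing solution $u_1$ can be produced directly from an initial value problem, whereas the decreasing solution $u_0$ requires a limiting (principal-solution) construction that is forced on us by the interval being open at~$b$. The single structural fact driving everything is that $q\le 0$ makes any positive solution convex: from \eref{guz} one has $u''=-q\,u\ge 0$ wherever $u>0$.

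First I would construct $u_1$ by solving \eref{guz} with the initial data $u_1(a)=1$, $u_1'(a)=0$; linearity guarantees the solution extends over all of $[a,b)$. Let $[a,c)$ be the maximal subinterval on which $u_1>0$. There $u_1'$ is nondecreasing (by convexity) and starts at $u_1'(a)=0$, so $u_1'\ge 0$ and hence $u_1\ge u_1(a)=1$; being bounded away from~$0$, $u_1$ cannot vanish, which forces $c=b$. This already yields $u_1>0$, $u_1'\ge 0$ on $[a,b)$ with no further work.

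The decreasing solution is the harder part, because one cannot impose data at the (possibly excluded) endpoint~$b$. I would fix an increasing sequence $b_n\uparrow b$ and let $u_n$ be the unique solution of the two-point problem $u_n(a)=1$, $u_n(b_n)=0$; existence, uniqueness, and positivity of $u_n$ on $[a,b_n)$ all follow from disconjugacy, which holds because $q\le 0$ (the remark after Proposition~\ref{prop:1}). Convexity shows each $u_n$ is nonincreasing on $[a,b_n]$, so $u_n'(a)\le 0$. The crucial step is the monotonicity of the initial slopes: for $b_n<b_m$ the difference $w=u_m-u_n$ solves \eref{guz} with $w(a)=0$ and $w(b_n)=u_m(b_n)>0$, so by disconjugacy $w$ keeps one sign on $(a,b)$, hence $w>0$ near~$a$, giving $u_m'(a)=w'(a)>u_n'(a)$. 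Thus $\{u_n'(a)\}$ is strictly increasing and bounded above by~$0$, hence convergent.

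Finally I would pass to the limit. Continuous dependence on initial data sends $u_n$ to the solution $u_0$ with $u_0(a)=1$ and $u_0'(a)=\lim_n u_n'(a)\le 0$, uniformly on compact subsets of $[a,b)$; since on any $[a,c]\subset[a,b)$ the tail of the sequence satisfies $u_n>0$ and $u_n'\le 0$, the limit obeys $u_0\ge 0$ and $u_0'\le 0$ on $[a,b)$. Strict positivity is then recovered from disconjugacy: if $u_0$ vanished at some interior $z_0$, the inequalities $u_0\ge 0$, $u_0'\le 0$ would make $z_0$ a double zero, forcing $u_0\equiv 0$ and contradicting $u_0(a)=1$. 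I expect the monotonicity of the initial slopes and the control of the limiting solution near~$b$ to be the main obstacle, whereas the construction of $u_1$ is essentially immediate.
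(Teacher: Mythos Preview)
The paper does not prove this proposition; it simply quotes it as Corollary~6.4 of Hartman and adds only the remark that $u_0$ and $u_1$ are linearly independent by construction. So there is no ``paper's proof'' to compare with, and the relevant question is whether your argument stands on its own.

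It does. Your construction of $u_1$ from the initial data $u_1(a)=1$, $u_1'(a)=0$ together with convexity is clean and correct. For $u_0$ your limiting scheme via the two-point problems $u_n(a)=1$, $u_n(b_n)=0$ is precisely the standard construction of a \emph{principal solution} at the right endpoint; the monotonicity of the initial slopes $u_n'(a)$ follows exactly as you argue (the difference $w=u_m-u_n$ has $w(a)=0$, $w(b_n)>0$, and disconjugacy forbids a second zero, so $w>0$ on $(a,b)$ and $w'(a)>0$ since a double zero would force $w\equiv 0$). The passage to the limit by continuous dependence and the exclusion of an interior zero via the double-zero argument are both sound. As a minor addendum you could note that $u_0$ and $u_1$ are automatically linearly independent: $u_1'(a)=0$ while $u_0'(a)=\lim u_n'(a)<0$ (strictness follows because already $u_1'(a)\ne u_n'(a)$ for every finite $n$), which is the only extra piece of information the paper records about this result.
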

Here, it is important to note that $u_0(z)$ and $u_1(z)$ are linearly independent by construction.

Another useful criterion is 
\begin{propos}[Corollary 7.1 \cite{Hartman}]
\label{prop:3}
Let $q(z)$ be continuous function on \mbox{$I:a\leq z<b$}, $C$ is a constant , and 
\begin{equation}
\label{Q}
Q(z)=C-\int_a^z q(s)ds.
\end{equation}
If the differential equation 
\begin{equation}
\label{Quz}
u''+4Q^2(z)u=0
\end{equation}
is disconjugate on $I$ then \eref{guz} is also disconjugate on $I$.
\end{propos}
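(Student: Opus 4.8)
The plan is to reduce the whole statement to the variational characterisation of disconjugacy furnished by Proposition~\ref{prop:1}. By that criterion it suffices to show that the functional $J(\eta;a,b)=\int_a^b(\eta'^2-q\eta^2)\,dz$ of~\eref{jfun1}, attached to~\eref{guz}, is nonnegative on every closed bounded subinterval $[a,b]\subset I$ for every admissible $\eta$, and vanishes only when $\eta\equiv0$. The whole comparison will then rest on producing a lower bound for $J$ in terms of the analogous functional $\tilde J(\eta;a,b)=\int_a^b(\eta'^2-4Q^2\eta^2)\,dz$ belonging to~\eref{Quz}, which is nonnegative by hypothesis.

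First I would exploit the definition~\eref{Q}, which gives $q(z)=-Q'(z)$, so the indefinite term in $J$ can be rewritten and integrated by parts. Working with $\eta$ in the class $A_2(a,b)$ (continuously differentiable and vanishing at the endpoints), the boundary contributions drop and one arrives at
\[
J(\eta;a,b)=\int_a^b\eta'^2\,dz-2\int_a^b Q\,\eta\,\eta'\,dz .
\]
The integration by parts is legitimate because $Q$ is an indefinite integral of the continuous function $q$ and is therefore continuously differentiable; the same manipulation carries over to the broader class $A_1(a,b)$, since on a bounded interval an absolutely continuous $\eta$ is bounded and $\eta'\in L^2\subset L^1$.

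The single genuine idea is a pointwise completion of the square,
\[
\eta'^2-2Q\eta\eta'=\tfrac12\bigl(\eta'^2-4Q^2\eta^2\bigr)+\tfrac12\bigl(\eta'-2Q\eta\bigr)^2 ,
\]
which after integration yields the exact identity
\[
J(\eta;a,b)=\tfrac12\,\tilde J(\eta;a,b)+\tfrac12\int_a^b\bigl(\eta'-2Q\eta\bigr)^2\,dz .
\]
It is precisely the coefficient $4$ in~\eref{Quz} that makes this square close up; any other multiple would leave a residual indefinite term. Since the last integral is manifestly nonnegative, one obtains $J(\eta;a,b)\ge\tfrac12\,\tilde J(\eta;a,b)$.

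To conclude I would feed in the hypothesis. As~\eref{Quz} is disconjugate on $I$, the forward direction of Proposition~\ref{prop:1} gives $\tilde J(\eta;a,b)\ge0$ on every closed bounded $[a,b]\subset I$, with equality only for $\eta\equiv0$. The displayed inequality then forces $J(\eta;a,b)\ge0$, and $J=0$ would compel $\tilde J=0$ and hence $\eta\equiv0$. Applying Proposition~\ref{prop:1} in the reverse direction to~\eref{guz} establishes its disconjugacy on $I$. I expect the only point requiring care to be the regularity bookkeeping for the integration by parts and for the equality case across the chosen test-function class; the algebra itself is an exact identity, and because the criterion of Proposition~\ref{prop:1} is already stated through closed bounded subintervals, the half-open interval $I:a\le z<b$ needs no separate argument.
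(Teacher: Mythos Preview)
The paper does not actually prove Proposition~\ref{prop:3}; it merely quotes it as Corollary~7.1 of Hartman's book, alongside the other cited results in Appendix~A.1. So there is no ``paper's own proof'' to compare against.

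That said, your argument is correct and is in fact the standard proof of this corollary. The key steps --- rewriting $q=-Q'$, integrating by parts to get $J(\eta)=\int(\eta'^2-2Q\eta\eta')\,dz$, and then the exact completion of the square
\[
\eta'^2-2Q\eta\eta'=\tfrac12(\eta'^2-4Q^2\eta^2)+\tfrac12(\eta'-2Q\eta)^2
\]
--- are precisely Hartman's. Your remark that the coefficient $4$ in~\eref{Quz} is what makes the square close is the right observation. The equality case and the regularity bookkeeping for $A_1$ versus $A_2$ are handled correctly: on a bounded interval $\eta$ absolutely continuous with $\eta'\in L^2$ makes $\eta^2$ absolutely continuous and $Q\eta\eta'\in L^1$, so the integration by parts is legitimate. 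Nothing is missing.
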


\subsection{Disconjugacy of equation \eref{uz}.}
\label{sec:ap1}
\begin{cor}
If~\mbox{$M_+H<1/4$} where~$M_+$ is defined by~\eref{Mp} then the equation~\eref{uz} is disconjugate on  $z_2\leq z$. 
\end{cor}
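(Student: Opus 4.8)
The plan is to separate the two signs in \eref{uz} and to reduce the substantive case to an Euler-type comparison. For outgoing rays the coefficient in \eref{uz} is $q=H\,\partial_z V$, and since $\partial_z V<0$ on $(z_2,\infty)$ (the monopole decay visible in \Fref{fig:5}), one has $q\le 0$, so disconjugacy is immediate from Proposition~\ref{prop:2} with no constraint on $M_+H$. The corollary therefore really concerns ingoing rays, for which $q=-H\,\partial_z V>0$ on $(z_2,\infty)$; here Proposition~\ref{prop:2} is unavailable and the hypothesis $M_+H<1/4$ must do the work.

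For the ingoing equation I would apply Proposition~\ref{prop:3} with $a=z_2$. Then $Q(z)=C-\int_{z_2}^z q\,ds=C+H(V(z)-V(z_2))$, and because $V(z)\to 0$ at infinity, any choice $C\neq HV(z_2)$ forces $Q$ to a nonzero constant, making $u''+4Q^2u=0$ oscillatory at infinity and hence \emph{not} disconjugate; thus $C=HV(z_2)$ is forced and $Q=HV$. By Proposition~\ref{prop:3} it then suffices to prove that $u''+4H^2V^2(z)\,u=0$ is disconjugate on $[z_2,\infty)$.

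To finish I would use the variational criterion of Proposition~\ref{prop:1}. For this reduced equation $J(\eta)=\int(\eta'^2-4H^2V^2\eta^2)$, and if the pointwise estimate $4H^2V^2(z)\le \tfrac14(z-z_2)^{-2}$, equivalently $H(z-z_2)V(z)\le 1/4$, holds on $[z_2,\infty)$, then the one-dimensional Hardy inequality $\int\eta'^2\ge\tfrac14\int\eta^2/(z-z_2)^2$ (valid for $\eta$ vanishing at the endpoints, after extension by zero) gives $J(\eta)\ge 0$ on every subinterval, hence disconjugacy. Equivalently, $\tfrac14(z-z_2)^{-2}$ is the coefficient of the Euler equation with positive solution $(z-z_2)^{1/2}$, which is disconjugate by Proposition~\ref{prop:0}, and domination of coefficients propagates disconjugacy through $J$. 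By the asymptotics \eref{omega1as} one has $V\sim M_+/z$, so $H(z-z_2)V(z)\to HM_+<1/4$ as $z\to\infty$ and vanishes as $z\to z_2^+$; the strict inequality is exactly the slack needed near infinity to absorb the subleading (quadrupole) corrections.

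The crux is upgrading these two endpoint limits to the \emph{global} bound $H(z-z_2)V(z)\le 1/4$ on the whole half-line, i.e. ruling out an interior overshoot of $(z-z_2)V(z)$ above $1/(4H)$. This is a statement about the full Aichelburg-Schein potential rather than its leading monopole, so I expect it to require the explicit image series \eref{omega2} together with the maximum principle; the natural mechanism is that $(z-z_2)V(z)$ increases monotonically to its limit $M_+$ (as it does for a pure monopole), in which case its supremum is $M_+<1/(4H)$ and the estimate follows. Establishing this monotonicity, or otherwise controlling the quadrupole and higher corrections of \eref{omega1as} uniformly in $z$, is the main obstacle; once it is cleared, the chain Proposition~\ref{prop:3} followed by the Hardy/Proposition~\ref{prop:1} argument yields the claim, and the threshold is sharp because at $M_+H=1/4$ the comparison Euler equation sits precisely at its oscillation limit.
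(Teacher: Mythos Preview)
Your structure is exactly the paper's: handle outgoing rays by Proposition~\ref{prop:2} (no constraint), then for ingoing rays apply Proposition~\ref{prop:3} to reduce to $u''+4H^2V^2u=0$ with $Q=HV$, and establish disconjugacy of this auxiliary equation by majorising $4Q^2$ with an Euler coefficient and invoking Proposition~\ref{prop:1}. The only substantive difference is your choice of majorant, and that is precisely where your self-identified gap sits.

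You compare $4H^2V^2(z)$ against the \emph{critical} Euler coefficient $\tfrac14(z-z_2)^{-2}$, which forces you to prove the mixed, $H$-dependent global bound $H(z-z_2)V(z)\le 1/4$ on all of $[z_2,\infty)$; you then correctly flag the interior-overshoot issue as unresolved. The paper avoids this by shifting the center of the comparison from $z_2$ to $d$ (the center of the near sphere) and using the subcritical majorant
\[
Q(z)=HV(0,0,z)\;<\;Q^+(z)=\frac{M_+H}{z-d}.
\]
This decouples the two ingredients: the pointwise bound $V(z)<M_+/(z-d)$ is a purely potential-theoretic statement about the Aichelburg--Schein image series (independent of $H$), while the hypothesis $M_+H<1/4$ enters only to guarantee that the resulting Euler equation $u''+\dfrac{4M_+^2H^2}{(z-d)^2}u=0$ has a positive solution on $[z_2,\infty)$. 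With such $u>0$ in hand, the substitution $\zeta=\eta/u$ gives $J^+(\eta;a,b)=\int_a^b u^2\zeta'^2\,dz>0$, and then $J\ge J^+>0$ finishes via Proposition~\ref{prop:1} and Proposition~\ref{prop:3}.

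Note that once you accept the paper's estimate $V(z)<M_+/(z-d)$, your own bound drops out for free: since $z_2>d$ one has $(z-z_2)/(z-d)<1$, hence $H(z-z_2)V(z)<HM_+<1/4$. So your monotonicity speculation is unnecessary; the cleaner route is to recenter the comparison at $d$ and prove the $H$-free inequality $V<M_+/(z-d)$ directly from the image representation~\eref{omega2}, where all image charges lie at axial positions with $|z_{\mathrm{im}}|\le d$.
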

\begin{proof}
For ingoing rays, we apply Proposition \ref{prop:3}.  Then \eref{Q} gives 
\begin{equation}
\label{Qz}
Q(z)=HV(0,0,z)
\end{equation}
by properly adjusting the constant $C$. On the other hand, let a continuous function~$Q^+(z)\geq Q(z)$ on~$z\geq z_2$
then corresponding functionals \eref{jfun2}
\begin{equation}
\label{Jp}
J^+(\eta;a,b)\leq J(\eta;a,b)
\end{equation}  
for all subintervals $[a,b]$.
Therefore, if $J^+(\eta;a,b)\geq 0$ then \eref{Quz} is disconjugate by Proposition \ref{prop:1}. 
It can be shown that
\begin{equation}
\label{Qpz}
Q(z)<Q^+(z)=\frac{M_+H}{z-d}
\end{equation}
where~$d$ as in~\eref{rd}. 
The equation of type \eref{Quz} for~$Q^+(z)$ is the Euler's equation.
It follows from \eref{ssktr} that there exists a solution~\mbox{$u(z)>0$} on~$[z_2,\infty)$ if~$M_+H<1/4$. 
Therefore let us rewrite $J^+(\eta;a,b)$ by using function~\mbox{$\zeta(z)=\eta(z)/u(z)$}. 
Then integration by parts yields 
\[
J^+(\eta;a,b)=\int_a^b u^2\zeta'^2 dz >0
\]
cf. also proof of Theorem 6.2 in \cite{Hartman}.
Hence, by Proposition \ref{prop:3}, equation \eref{uz} is disconjugate for ingoing rays if~\mbox{$M_+H<1/4$}.
\end{proof}
 
Note, this inequality is also sufficient for~\eref{uz} to be \mbox{non-oscillatory} at infinity.
In this weaker case every solution of~\eref{uz} has at most a finite number of roots in its domain. 
This result is provided by~\mbox{\cite[Theorem 7.1]{Hartman}} and asymptotic~\eref{omega1as}.
\begin{cor}
If~\mbox{$M_+H<1/4$} the equation~\eref{uz} is disconjugate on~ $|z|\leq z_1$.
\end{cor}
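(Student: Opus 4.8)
The plan is to mirror the argument used for the preceding corollary on $z_2\le z$, adapting the comparison function to the compact interval between the shells. First I would use the symmetry $z\to-z$ together with the sign-swap rule noted after~\eref{uz} to fix one choice of sign, and then invoke Proposition~\ref{prop:3} with $Q(z)=HV(0,0,z)$ (the integration constant in~\eref{Q} being chosen to cancel the boundary term, exactly as in~\eref{Qz}). This reduces the claim to disconjugacy of $u''+4H^2V^2u=0$ on $[-z_1,z_1]$. Since $V>0$ we have $Q\ge0$, and by~\Fref{fig:5} the restriction $V(0,0,z)$ is even, attains its minimum $V_s$ at the origin, and rises monotonically to $V_0$ at the shells $z=\pm z_1$.

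Next I would introduce the majorant $Q^+(z)=HM_+/(d-|z|)$, with $d$ as in~\eref{rd} and $M_+$ as in~\eref{Mp}, and compare as in the previous corollary: because $Q^+\ge Q\ge0$, the functionals~\eref{jfun2} obey $J^+(\eta;a,b)\le J(\eta;a,b)$ on every subinterval. On each half of $[-z_1,z_1]$ the equation $u''+4(Q^+)^2u=0$ is an Euler equation in the variable $d\mp z$, with indicial exponents $s_\pm=\tfrac12\bigl(1\pm\sqrt{1-16H^2M_+^2}\bigr)$. These are real, distinct and positive precisely when $M_+H<1/4$, which is the hypothesis at hand.

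To finish I would exhibit a positive solution of the majorant equation on the whole interval. Taking the even $C^1$ solution determined by $u(0)>0$, $u'(0)=0$, an explicit computation with the two Euler branches $\left(d-|z|\right)^{s_\pm}$ shows, using $s_+>s_-$, that $u$ stays strictly positive on $[-z_1,z_1]$; hence $u''+4(Q^+)^2u=0$ is disconjugate by Proposition~\ref{prop:0}. Writing $\zeta=\eta/u$ and integrating by parts gives $J^+(\eta;a,b)=\int u^2\zeta'^2\,dz\ge0$, so $J(\eta;a,b)\ge J^+(\eta;a,b)\ge0$; by Proposition~\ref{prop:1} the equation $u''+4Q^2u=0$ is disconjugate, and by Proposition~\ref{prop:3} so is~\eref{uz}.

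The hard part is the pointwise bound $V(0,0,z)\le M_+/(d-|z|)$ that underlies $Q^+\ge Q$. In the previous corollary the analogous estimate followed from the far-field expansion~\eref{omega1as}; here, however, the interval lies between the shells where that expansion is useless, so I would instead work directly with the image representation~\eref{omega2}, checking the inequality by grouping the alternating series into positive pairs and bounding each image contribution on the axis. Establishing this estimate uniformly on $[-z_1,z_1]$ --- in particular near the far shell $z=-z_1$, where a single-centre majorant centred at $z=d$ would fail --- is the crux of the argument.
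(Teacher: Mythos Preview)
Your route differs from the paper's in a key structural choice. The paper explicitly \emph{avoids} Proposition~\ref{prop:3} on the closed bounded interval $[-z_1,z_1]$, remarking that the proposition as stated applies to half-closed intervals. Instead of passing to $u''+4H^2V^2u=0$, the paper majorises the coefficient $q(z)=\pm H\,\partial_z V$ of~\eref{uz} \emph{directly} by
\[
q^+(z)=\frac{HM_+}{(c\mp z)^2}\qquad(\text{outgoing/ingoing}),
\]
with $c$ the bispherical parameter from~\eref{rd} (note $z_1=c\tanh(\mu_0/2)<c<d$, so $q^+$ is bounded on $I$). This is already an Euler coefficient, so no integral transform is needed: Proposition~\ref{prop:0} supplies a positive solution of the majorant equation when $M_+H<1/4$, and then the comparison $J^+\le J$ together with Proposition~\ref{prop:1} finishes the argument exactly as in the first corollary.

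Your approach---reducing via Proposition~\ref{prop:3} and then majorising $V$ itself by $M_+/(d-|z|)$---is not wrong in principle, but it inserts an extra layer and trades one pointwise estimate for another. The paper needs $\pm\partial_z V\le M_+/(c\mp z)^2$ on $[-z_1,z_1]$; you need $V\le M_+/(d-|z|)$. Neither bound is actually proved in detail (the paper simply asserts its majorisation), so on that score the two arguments are equally sketchy. What the paper's choice buys is economy: it sidesteps the applicability question for Proposition~\ref{prop:3} on closed intervals, keeps the comparison at the level of~\eref{uz} rather than the squared-potential equation, and avoids the piecewise (non-smooth at $z=0$) majorant and the attendant patching of Euler branches that you describe. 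The detour through Proposition~\ref{prop:3} is the step I would drop.
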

\begin{proof}
In this case Proposition \ref{prop:3} is not useful since it is defined on a half-closed interval.
Instead, the function
\[
q(z)=\pm H\frac{\partial V}{\partial z}
\]
on~\mbox{$I=[-z_1,z_1]$} can be directly majorized by
\begin{equation}
\label{qpz}
q^+(z)=
\begin{cases}
\frac{HM_+}{(c-z)^2} &\qquad \text{for outgoing rays},\\
\frac{HM_+}{(c+z)^2} &\qquad \text{for ingoing rays}.
\end{cases}
\end{equation} 
Note, that $z_1<c$ and \eref{qpz} are bounded on the interval in question.
Then~\eref{Jp} is satisfied for any subinterval~\mbox{$[a,b]\in I$}. For \eref{qpz}, equation \eref{guz}
has a solution~\mbox{$u(z)>0$} on $I$ provided~\mbox{$M_+H<1/4$}. 
Therefore this equation is disconjugate by Proposition \ref{prop:0}. 
From Proposition~\ref{prop:1} it follows that $J^+(\eta;a,b)$ is positive.
Thus~$J(\eta;a,b)$ is also positive and~\eref{uz} is disconjugate when~\mbox{$M_+H<1/4$}.
\end{proof}

\section*{References}

\end{document}